\newtheorem{theorem}{Theorem}
\newtheorem{lemma}[theorem]{Lemma}
\newtheorem{open}[theorem]{Open Problem}
\newcommand{\ord}{{\mathrm{ord}}}
\newcommand{\lcm}{{\mathrm{lcm}}}
\newcommand{\gf}{{\mathrm{GF}}}
\newcommand{\wt}{{\mathtt{wt}}}
\newcommand{\PGRM}{{\mathtt{PGRM}}}
\newcommand{\Z}{\mathbb{{Z}}}
\newcommand{\m}{\mathbb{M}}
\newcommand{\cB}{{\mathcal{B}}}
\newcommand{\C}{{\mathcal{C}}}
\newcommand{\cS}{{\mathcal{S}}}
\newcommand{\bc}{{\mathbf{c}}}
\newcommand{\BCH}{{\mathrm{BCH}}}
\newcommand{\Rmnum}[1]{\expandafter\@slowromancap\romannumeral #1@}
\begin{document}

\title{The Exact Parameters of A Family of BCH Codes 
\thanks{Z. Sun's research was supported by was supported by Anhui Provincial Natural Science Foundation under Grant Number 2408085MA014 and the Fundamental Research Funds for the Central Universities under Grant Number JZ2024HGTB0203 and the National Natural Science Foundation of China under Grant Number U21A20428.}}

\author{Zhonghua Sun 
\thanks{Zhonghua Sun is with the School of Mathematics, Hefei University of Technology, Hefei, Anhui, 230601, China (e-mail:  sunzhonghuas@163.com).}
}

\maketitle

\begin{abstract}
Despite the theoretical and practical significance of BCH codes, the exact minimum distance and dimension remain unknown for many families. This paper establishes the precise minimum distance and dimension of narrow-sense BCH codes $\C_{(q, m, \lambda, \ell_0, \ell_1)}$ over $\gf(q)$ of length $\frac{q^m-1}{\lambda}$ and designed distance $\frac{(q-\lambda \ell_0)q^{m-1-\ell_1}-1}{\lambda}$, where $\lambda\mid (q-1)$, $0\leq \ell_0< \frac{q-1}{\lambda}$, and $0\leq \ell_1\leq m-1$. These results conclusively resolve the three open problems posed by Li et al. (IEEE Trans. Inf. Theory, vol. 63, no. 11, pp. 7219-7236, Nov. 2017) while establishing complementary advances to Ding's seminal framework (IEEE Trans. Inf. Theory, vol. 61, no. 10, pp. 5322-5330, Oct. 2015).

\vspace*{.3cm}

\noindent
{\bf Keywords:} BCH code, ~cyclic code,~linear code.

\end{abstract}

\section{Introduction and Motivation}

Let $q$ be a prime power and let $\gf(q)$ be the finite field of order $q$. An $[ n, k, d]$ linear code $\C$ over $\gf(q)$ is a $k$-dimensional subspace of $\gf(q)^n$ with minimum distance $d$. The code $\C$ is called {\it cyclic} if $(c_0,c_1,\ldots, c_{n-1})\in\C$ implies $(c_{n-1},c_0,\ldots,c_{n-2})\in \C$. Define
\begin{align*}
\varphi:~\gf(q)^n &\rightarrow R:=\gf(q)[x]/\langle x^n-1\rangle\\
(c_0,c_1,\ldots,c_{n-1})&\mapsto c_0+c_1x+\cdots+c_{n-1}x^{n-1}.
\end{align*}
Let $\C\subseteq \gf(q)^n$, define $\varphi(\C)=\{\varphi(\bc):~\bc\in \C\}$. In the paper, we identify $\C$ with $\varphi(\C)$. The code $\C$ is cyclic if $\C$ is an ideal of the ring $R$. It is well known that every ideal of $R$ is the principal. Let $\C=\langle g(x)\rangle$ be a cyclic code over $\gf(q)$ of length $n$, where $g(x)$ is monic and has the smallest degree among all the generators of $\C$. Then $g(x)$ is unique and is called the generator polynomial, and $h(x)=(x^n-1)/g(x)$ is called the parity check polynomial of $\C$.

Let $n$ be a positive integer with $\gcd(n, q)=1$.  Denote by $\ord_n (q)$ the multiplicative order of $q$ modulo $n$. Set $m=\ord_{n}(q)$. Then there exists an integer $\lambda\geq 1$ such that $n=(q^m-1)/\lambda$, and define $N=n\lambda$. Let $\alpha$ be a primitive element of $\gf(q^m)$. For $0\leq i\leq N-1$, the \textit{ minimal polynomial} $\m_{\alpha^i}(x)$ of $\alpha^i$ over $\gf(q)$ is the monic polynomial of the smallest degree over $\gf(q)$ with $\alpha^i$ as zero. Let $\beta=\alpha^\lambda$. Then $\beta$ is a primitive $n$-th root of unity. Let $2 \leq \delta \leq n$. A \emph{BCH cyclic code} over $\gf(q)$ of length $n$ and \emph{designed distance} $\delta$ with respect to $\beta$, denoted by $\C_{(q, m, \lambda, \delta)}$, is a cyclic code over $\gf(q)$ of length $n$ with generator polynomial 
\begin{eqnarray}\label{eqn-BCHdefiningSet}
\lcm(\m_{\beta^{1}}(x), \m_{\beta^{2}}(x), \cdots, \m_{\beta^{\delta-1}}(x)),  
\end{eqnarray}
where lcm denotes the least common multiple of these polynomials. The code $\C_{(q, m, \lambda, \delta)}$ is \emph{primitive} if $\lambda=1$, and \emph{non-primitive} otherwise.

BCH codes, introduced independently by Hocuenghem \cite{BCH1}, Bose and Ray-Chaudhuri \cite{BCH2}, have significant theoretical and practical importance. A fundamental open problem is determining their dimensions and true minimum distances \cite{Charpin}. Although the dimensions of various BCH codes have been extensively studied \cite{Diem1,CJLM,DDZ,DLMQ,GLQS,HYWSM,BCH4,LLDL,BCH6,LLFLR,LLGS,LYW,PZK,WWLW,ZSK}, the minimum distance remains unresolved, in general. The minimum distances of primitive BCH codes were determined \cite{Min1,BST,Ding,DDZ,DFZ,KL,LI,LWL,BCH7,SB}; however, the results for non-primitive BCH codes are very limited \cite{BCH5,BCH7,NLJM,XWLC,ZSWH,ZSK}. For more information on the BCH codes, we refer the reader to \cite{DL}.   

The main objective of this paper is to determine the dimension and minimum distance of the code $\C_{(q, m, \lambda, \ell_0, \ell_1)}$ with the designed distance $$\delta=\frac{(q-\lambda \ell_0)q^{m-1-\ell_1}-1}{\lambda},$$ for integer pairs $(\ell_0, \ell_1)$, where $\lambda\mid (q-1)$, $0\leq \ell_0< \frac{q-1}{\lambda}$, and $0\leq \ell_1\leq m-1$. These flexible codes generalize many BCH codes. Previous work includes: 
\begin{enumerate}
    \item For $\lambda=1$, Ding established good parameters and solved the minimum distance of $\C_{(q, m, \lambda, \ell_0, \ell_1)}$, but left its dimension incompletely determined \cite{Ding}. 
    \item For $\lambda=q-1$, Li et al. demonstrated very good parameters and posed three open problems \cite{BCH5}:
 \begin{open}\cite[Open Problem 37]{BCH5}\label{Open1}
 Dose the code $\C_{(q, m, q-1, 0, \ell_1 )}$ have minimum distance $(q^{m-\ell_1}-1)/(q-1)$, where $1\leq \ell_1\leq  m-2$?
 \end{open}
 \begin{open}\cite[Open Problem 43]{BCH5}\label{Open2}
 Dose the code $\C_{(q, m, q-1,0, m-2 )}$ have minimum distance $q+1$? 
 \end{open}
 \begin{open}\cite[Open Problem 47]{BCH5}\label{Open3}
    Determine the dimension of $\C_{(q, m, q-1,0, \ell_1 )}$ for $1\leq \ell_1\leq  \lfloor (m-2)/2\rfloor$.
    \end{open}
\end{enumerate}
Our results completely resolve these open problems. We assess code optimality by comparing with the best known linear codes in Grassl's database \cite{Grassl}.

\section{Punctured generalized Reed-Muller codes}\label{sec2}

Throughout this paper, we fix the following notation, unless it is stated otherwise:
\begin{enumerate}
    \item $q$ is a prime power.
    \item $m\geq 2$ is an integer.
	\item $\lambda\geq 1$ and $\lambda \mid (q-1)$.
	\item $N=q^m-1$.
	\item $\alpha$ is a primitive element of $\gf(q^m)$.
	\item $\beta=\alpha^\lambda$ is a primitive $n$-th root of unity.
\end{enumerate}
Let $\Z_{N}=\{0,1,2,\cdots, N-1\}$ denote the ring of integers modulo $N$. For any integer $i$, let $i\bmod N$ denote the unique $j\in \Z_N$ such that $i\equiv j\pmod{N}$. For $i \in \Z_{N}$, the \emph{$q$-cyclotomic coset of $i$ modulo $N$} is defined by 
\[C^{(q,N)}_i=\{iq^j \bmod {N}:~0\leq j\leq \ell_i-1\} , \]
where $\ell_i$ is the smallest positive integer such that $i \equiv i q^{\ell_i} \pmod{N}$, and is the \textit{size} of the $q$-cyclotomic coset $C^{(q, N)}_i$. The smallest integer in $C^{(q, N)}_i$ is its \textit{coset leader}. Let $\Gamma_{(q, N)}$ be the set of all coset leaders. Then $\Z_N$ partitions as
$$\Z_N=\bigsqcup_{i\in \Gamma_{(q, N)}} C_i^{(q, N)},$$
where $\sqcup$ denotes disjoint union.

 For $0\leq i\leq N$, the \textit{$q$-adic expansion} of $i$ is defined by $i=i_{m-1}\ q^{m-1}+\cdots +i_1\ q+i_0$, where $0\leq i_j\leq q-1$. The \textit{$q$-weight} of $i$, denoted by $\wt_q(i)$, is defined by $\wt_q(i)=i_0+i_1+\cdots+i_{m-1}$. It can be easily verified that $\wt_q(j)=\wt_q(i)$ for all $j\in C_i^{(q, N)}$, that is, the value $\wt_q(i)$ is invariant on $C_i^{(q, N)}$.
 
For $0\leq  \ell<(q-1)m$ and $\lambda \mid \ell$, define the polynomial
\begin{equation}\label{CRM}
g_{(q, m, \lambda,\ell)}(x)= \prod_{1\leq i \leq  n-1  \atop \wt_q(\lambda i) < q-1)m-\ell}(x-\alpha^{\lambda i}).	
\end{equation}
Since $\wt_q(\lambda j)=\wt_q(\lambda i)$ for all $j \in C_i^{(q, n)}$,  $g_{(q, m, \lambda,\ell)}(x)\in \gf(q)[x]$. The \textit{$\ell$-th order punctured generalized Reed-Muller code} $\PGRM_q(\ell,m)$ is the cyclic code over $\gf(q)$ of length $n$ with generator polynomial $g_{(q, m, \lambda,\ell)}(x)$. 
\begin{enumerate}
    \item When $\lambda=1$, $\PGRM_q(\ell,m)$ is said to be {\it primitive} \cite{AK98}.
    \item When $\lambda \mid (q-1)$ and $\lambda >1$, $\PGRM_q(\ell,m)$ is said to be {\it non-primitive}  \cite{DGM70}.
\end{enumerate}
Its minimum distance is characterized as follows:

\begin{lemma}\cite[Theorem 5.5.2]{AK98}\cite[Theorem 3.5.1]{DGM70} \label{LEM:5}
Let $\ell=(q-1)\ell_1+\lambda \ell_0$, where $0\leq \ell_1\leq m-1$ and $0\leq \ell_0<\frac{q-1}{\lambda}$. Then
$$d(\PGRM_q(\ell,m))=\frac{(q-\lambda \ell_0)q^{m-1-\ell_1}-1}{\lambda}.$$
\end{lemma}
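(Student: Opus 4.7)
The plan is to sandwich $d(\PGRM_q(\ell, m))$ between a BCH-style lower bound and an upper bound via an explicit minimum-weight codeword, both equal to the claimed value $\delta^* := \frac{(q-\lambda\ell_0)q^{m-1-\ell_1} - 1}{\lambda}$. For the lower bound, I would show that the defining set $T = \{1 \leq i \leq n-1 : \wt_q(\lambda i) < (q-1)m - \ell\}$ of $\PGRM_q(\ell, m)$ contains the entire block $\{1, 2, \ldots, \delta^* - 1\}$. The verification is a base-$q$ digit analysis: for any $k < (q-\lambda\ell_0)q^{m-1-\ell_1}$, the digits of $k$ satisfy $\wt_q(k) \leq (q-1)(m-1-\ell_1) + (q - \lambda\ell_0 - 1) = (q-1)m - \ell$, with equality only at $k = (q-\lambda\ell_0)q^{m-1-\ell_1} - 1$. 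For $j \leq \delta^* - 1$, the value $k = \lambda j$ falls short of this extremum by at least $\lambda \geq 1$, so the inequality is strict and $j \in T$. The BCH bound then yields $d \geq \delta^*$.

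For the matching upper bound, I identify $\PGRM_q(\ell, m)$ with the subcode of the primitive PGRM $\PGRM_q^{prim}(\ell, m)$ consisting of codewords invariant under the cyclic shift by $n$. Since $\lambda \mid q - 1$, the unique subgroup $H \subset \gf(q^m)^*$ of order $\lambda$ lies inside $\gf(q)^*$, and this shift corresponds to multiplication of evaluation points by a generator $\gamma \in H$. I then exhibit an explicit $H$-invariant codeword as the evaluation vector of
$$F(y_1, \ldots, y_m) = \prod_{i=1}^{\ell_1}(y_i^{q-1} - 1) \cdot \prod_{j=1}^{\lambda \ell_0}(y_{\ell_1+1} - a_j),$$
where $y_1, \ldots, y_m$ are coordinates in a fixed $\gf(q)$-basis of $\gf(q^m)$ and $\{a_j\}$ is chosen as the union of $\ell_0$ full $H$-cosets in $\gf(q)^*$. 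The Frobenius factors are invariant under scaling by $\gf(q)^*$, and the $H$-stability of the root set ensures $F(\gamma y) = F(y)$. A direct count yields $(q - \lambda\ell_0)q^{m-1-\ell_1} - 1 = \lambda\delta^*$ nonzero values on $\gf(q)^m \setminus \{0\}$, which descend under the $H$-quotient to a length-$n$ codeword of weight exactly $\delta^*$.

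The principal obstacle is this upper-bound construction in the non-primitive regime $\lambda > 1$: an arbitrary minimum-weight codeword of the primitive code need not descend cleanly to the non-primitive quotient. The key device is the hypothesis $\lambda \mid q - 1$ (equivalently $H \subset \gf(q)^*$), together with the divisibility $\lambda \mid \lambda\ell_0$ that allows the $\lambda\ell_0$ scalars $a_j$ to be partitioned into complete $H$-orbits. Once the $H$-invariance of $F$ is secured and its total degree is checked to be $(q-1)\ell_1 + \lambda\ell_0 = \ell$, the weight enumeration reduces to the classical primitive-case count, and the two matching bounds complete the proof.
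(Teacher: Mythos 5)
This lemma is not proved in the paper at all: it is imported verbatim from Assmus--Key \cite[Theorem 5.5.2]{AK98} and Delsarte--Goethals--MacWilliams \cite[Theorem 3.5.1]{DGM70}, so there is no in-paper argument to compare yours against. Judged on its own, your two-sided argument is the classical one and is essentially sound. Your lower bound is correct, and in fact your digit computation (every $k<(q-\lambda\ell_0)q^{m-1-\ell_1}$ has $\wt_q(k)\leq (q-1)m-\ell$, with equality only at $k=\lambda\delta^*$) is exactly the computation the paper itself performs inside the proof of Theorem \ref{Distance} to establish $\PGRM_q(\ell,m)\subseteq \C(q,m,\lambda,\ell_0,\ell_1)$; so that half duplicates work the paper does elsewhere. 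Your upper bound via the evaluation vector of $F=\prod_{i=1}^{\ell_1}(y_i^{q-1}-1)\cdot\prod_{j=1}^{\lambda\ell_0}(y_{\ell_1+1}-a_j)$ is the standard minimum-weight witness, and the adaptation to $\lambda>1$ by taking the $a_j$ to be a union of $\ell_0$ cosets of the order-$\lambda$ subgroup $H\subseteq\gf(q)^*$ is the right device: one does get $F(\gamma y)=\gamma^{\lambda\ell_0}F(y)=F(y)$ for $\gamma\in H$, and the support count $(q-\lambda\ell_0)q^{m-1-\ell_1}-1=\lambda\delta^*$ is correct.

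The one step you should not treat as free is the identification of the cyclically defined code $\PGRM_q(\ell,m)$ (given in Section \ref{sec2} purely by its generator polynomial) with the folded $H$-invariant subcode of the primitive evaluation code; that identification is essentially the content of the very theorem of \cite{DGM70} being cited, so asserting it wholesale comes close to assuming what is to be proved. Fortunately you only need one inclusion: that the folded vector $\tilde c$ obtained from $c=(F(P))_{P\in\gf(q^m)^*}$ satisfies $\tilde c(\beta^i)=0$ for every $i$ with $\wt_q(\lambda i)<(q-1)m-\ell$. This follows from the primitive-case root description of evaluation codes together with the elementary relation $c(\alpha^{\lambda i})=\lambda\,\tilde c(\beta^i)$ (valid because $c$ is constant on $H$-cosets and $\lambda$ is invertible in $\gf(q)$, as $\lambda\mid q-1$). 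Spelling out that computation would close the only real gap; with it, your two bounds meet at $\delta^*$ and the proof is complete.
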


\section{Parameters of the BCH  code $\C_{(q, m, \lambda, \ell_0, \ell_1)}$}

For integers $0 \leq \ell_1 \leq m-1$ and $0 \leq \ell_0 < \frac{q-1}{\lambda}$, define 
\begin{equation}\label{Delta}
\delta = \frac{(q - \lambda \ell_0) q^{m-1-\ell_1} - 1}{\lambda}.
\end{equation}
To ensure $\delta \geq 2$, we impose the additional constraint that when $\ell_1 = m-1$, $\ell_0$ satisfies $0 \leq \ell_0 \leq \frac{q-1}{\lambda} - 2$.

Let $\C(q, m, \lambda, \ell_0,\ell_1)$ be the BCH code over $\gf(q)$ of length $n$ and designed distance $\delta$ with respect to the primitive $n$-th root of unity $\beta$. By definition, the generator polynomial of $\BCH(q, m, \lambda, \ell_0,\ell_1)$ is 
\begin{equation}\label{gBCHg}
	g(x)=\lcm(\m_{\alpha^{\lambda}}(x), \m_{\alpha^{\lambda \cdot 2}}(x), \cdots, \m_{\alpha^{\lambda \cdot (\delta-1)}}(x)). 
\end{equation}

\subsection{Minimum distance of the BCH code $\C_{(q, m, \lambda, \ell_0, \ell_1)}$}

We now resolve two open problems from \cite{BCH5} by establishing the exact minimum distance of $\C_{(q, m, \lambda, \ell_0, \ell_1)}$. The following theorem connects the code to projective generalized Reed-Muller codes and determines its minimum distance.

\begin{theorem}\label{Distance}
	Let $\delta$ be defined as in \eqref{Delta} and $\ell=(q-1)\ell_1+\lambda \ell_0$. Then $\PGRM_q(\ell,m)\subseteq \C(q, m, \lambda, \ell_0,\ell_1)$ and $d(\C(q, m, \lambda, \ell_0,\ell_1))=d(\PGRM_q(\ell,m))=\delta$.
\end{theorem}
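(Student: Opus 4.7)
I will show $\PGRM_q(\ell,m)\subseteq\C(q,m,\lambda,\ell_0,\ell_1)$ by comparing defining sets, and then sandwich $d(\C(q,m,\lambda,\ell_0,\ell_1))$ between the BCH bound from below and the value $d(\PGRM_q(\ell,m))=\delta$ from Lemma \ref{LEM:5} above.

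The first step is to translate both codes to defining sets in $\Z_N$:
$$T_\C=\bigcup_{i=1}^{\delta-1}C^{(q,N)}_{\lambda i},\qquad T_{\PGRM}=\{j\in\Z_N : \lambda\mid j,\ 1\leq j\leq N-\lambda,\ \wt_q(j)<(q-1)m-\ell\}.$$
Since $\lambda\mid q-1$ forces $\gcd(\lambda,q)=1$, multiplication by $q$ preserves the set $\lambda\Z_N$, so every coset in $T_\C$ consists entirely of multiples of $\lambda$. Coupled with the constancy of $\wt_q$ on each $q$-cyclotomic coset, the inclusion $T_\C\subseteq T_{\PGRM}$ (equivalent to $\PGRM_q(\ell,m)\subseteq\C(q,m,\lambda,\ell_0,\ell_1)$) reduces to the single claim
$$\wt_q(\lambda i)<(q-1)m-\ell\qquad\text{for every }1\leq i\leq\delta-1.$$

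The heart of the argument is this $q$-adic weight estimate. Put $M:=(q-\lambda\ell_0)q^{m-1-\ell_1}-1$, so $\lambda(\delta-1)=M-\lambda$. Because $0\leq\lambda\ell_0\leq q-2$, the $q$-adic expansion of $M$ is
$$M=(q-\lambda\ell_0-1)\,q^{m-1-\ell_1}+\sum_{j=0}^{m-2-\ell_1}(q-1)\,q^j,$$
so $\wt_q(M)=(q-\lambda\ell_0-1)+(q-1)(m-1-\ell_1)=(q-1)m-\ell$. I would then verify that $M$ is the unique element of $[0,M]$ attaining this weight. Any $j\in[0,M]$ has $q$-adic support in positions $0,\ldots,m-1-\ell_1$; let $d$ denote its top digit. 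If $d<q-\lambda\ell_0-1$, then $\wt_q(j)\leq(q-\lambda\ell_0-2)+(q-1)(m-1-\ell_1)=\wt_q(M)-1$. If $d=q-\lambda\ell_0-1$, the hypothesis $j<M$ forces $j\bmod q^{m-1-\ell_1}\leq q^{m-1-\ell_1}-2$; applying the same unique-maximizer reasoning one level down (where $q^{m-1-\ell_1}-1$ is the unique top-weight integer in its range), the lower block contributes at most $(q-1)(m-1-\ell_1)-1$, again yielding $\wt_q(j)\leq\wt_q(M)-1$. Specialising to $j=\lambda i\leq M-\lambda<M$ proves the claim.

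Combining the claim with the reduction above gives $\PGRM_q(\ell,m)\subseteq\C(q,m,\lambda,\ell_0,\ell_1)$, whence Lemma \ref{LEM:5} delivers $d(\C(q,m,\lambda,\ell_0,\ell_1))\leq\delta$. The BCH bound supplies the matching $d(\C(q,m,\lambda,\ell_0,\ell_1))\geq\delta$, and equality follows. The main obstacle is the uniqueness statement for $M$ as a maximum-weight integer in $[0,M]$; once it is spelled out cleanly, the rest is just bookkeeping with $q$-cyclotomic cosets and the BCH bound. The endpoint cases $\ell_0=0$ (where the top digit of $M$ coincides with $q-1$) and $\ell_1=m-1$ (where $M$ is a single digit and the second branch of the case split is vacuous) are covered automatically by the argument but warrant a brief consistency check.
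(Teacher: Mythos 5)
Your proposal is correct and follows essentially the same route as the paper: both reduce the inclusion $\PGRM_q(\ell,m)\subseteq \C(q,m,\lambda,\ell_0,\ell_1)$ to the single inequality $\wt_q(\lambda i)<(q-1)m-\ell$ for $1\leq i\leq \delta-1$, prove it by a case split on the leading $q$-adic digit of $\lambda i$ (your ``unique maximizer of $\wt_q$ on $[0,\lambda\delta]$'' phrasing is just a cleaner packaging of the paper's Case 1/Case 2 with the counter $A$), and then sandwich the minimum distance between the BCH bound and Lemma \ref{LEM:5}. The only incidental difference is that the paper dispatches $\lambda=1$ by citing Ding while your argument treats all $\lambda$ uniformly.
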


\begin{proof}
The case $\lambda = 1$ follows directly from \cite[Theorem 10]{Ding}. For $\lambda > 1$, we proceed by establishing the inclusion $\PGRM_q(\ell,m) \subseteq \C(q, m, \lambda, \ell_0,\ell_1)$. This reduces to showing $g(x) \mid g_{(q, m, \lambda ,\ell)}(x)$, where $g(x)$ and $g_{(q, m, \lambda ,\ell)}(x)$ are defined in \eqref{gBCHg} and \eqref{CRM}, respectively. Equivalently, $\alpha^{\lambda i}$ must be a root of $g_{(q,m,\lambda,\ell)}(x)$ for all $1 \leq i \leq \delta - 1$, i.e., 
	\begin{equation}\label{weight_cond}
	  \wt_q(\lambda i) < (q-1)m - (q-1)\ell_1 - \lambda \ell_0.
	\end{equation}
	First, observe the decomposition:
	\begin{align*}
	\lambda (\delta-1) &= (q-\lambda \ell_0)q^{m-1-\ell_1}-1-\lambda \\
	&= (q-1-\lambda \ell_0) q^{m-1-\ell_1} + (q-1)\sum_{i=1}^{m-2-\ell_1} q^i + (q-1-\lambda).
	\end{align*}
	Consequently, the $q$-weight satisfies
	\begin{align*}
	\wt_q(\lambda(\delta-1)) &= (q-1)m - (q-1)\ell_1 - \lambda \ell_0 - \lambda \\
	&< (q-1)m - (q-1)\ell_1 - \lambda \ell_0.
	\end{align*}
	
	For $1 \leq i \leq \delta-1$, consider the $q$-adic expansion $\lambda i = i_{m-1-\ell_1}q^{m-1-\ell_1} + \cdots + i_1 q + i_0$, with $0 \leq i_j \leq q-1$ for $0 \leq j \leq m-2-\ell_1$ and $0 \leq i_{m-1-\ell_1} \leq q-1 - \lambda \ell_0$ (since $\lambda i \leq \lambda(\delta - 1)$). We analyze two cases:
	\begin{itemize}
		\item \textit{Case 1}: $0 \leq i_{m-1-\ell_1} < q-1 - \lambda \ell_0$. Then
		\begin{align*}
			\wt_q(\lambda i) &\leq (q-1)(m-1-\ell_1) + i_{m-1-\ell_1} \\
			&< (q-1)(m-1-\ell_1) + q-1 - \lambda \ell_0 \\
			&= (q-1)m - (q-1)\ell_1 - \lambda \ell_0.
		\end{align*}
		
		\item \textit{Case 2}: $i_{m-1-\ell_1} = q-1 - \lambda \ell_0$. Let $A = |\{0 \leq j \leq m-2-\ell_1 : i_j = q-1 \}|$. Since $\lambda i \leq \lambda(\delta-1)$, we have $A < m-1-\ell_1$. Thus
		\begin{align*}
			\wt_q(\lambda i) &\leq (q-1 - \lambda \ell_0) + (q-1)A + (q-2)(m-1-\ell_1 - A) \\
			&= (q-1)m - (q-1)\ell_1 - \lambda \ell_0 - (m-1-\ell_1 - A) \\
			&< (q-1)m - (q-1)\ell_1 - \lambda \ell_0.
		\end{align*}
	\end{itemize}
	This verifies \eqref{weight_cond}, proving the inclusion $\PGRM_q(\ell,m) \subseteq \C(q, m, \lambda, \ell_0,\ell_1)$.
	
	Combining the BCH bound with this inclusion yields
	\[
	\delta \leq d(\C(q, m, \lambda, \ell_0,\ell_1)) \leq d(\PGRM_q(\ell,m)).
	\]
	By Lemma \ref{LEM:5}, $d(\PGRM_q(\ell,m)) = \delta$, concluding
	\[
	d(\C(q, m, \lambda, \ell_0,\ell_1)) = d(\PGRM_q(\ell,m)) = \delta.
	\]
	This completes the proof.
\end{proof}

When $\lambda = q-1$, Theorem \ref{Distance} yields
$$d(\C_{(q, m, q-1, (q^{m-\ell_1}-1)/(q-1) )})=(q^{m-\ell_1}-1)/(q-1)$$
for $0 \leq \ell_1 \leq m-2$. This resolves Open Problems \ref{Open1} and \ref{Open2} posed by Li et al. \cite{BCH5}.

\subsection{Dimension of the BCH  code $\C_{(q, m, \lambda, \ell_0, \ell_1)}$ }

We next determine the dimension of $\C_{(q, m, \lambda, \ell_0, \ell_1)}$. Set $N = n\lambda = q^m - 1$. For $0 \leq i \leq N$, consider the unique $q$-adic expansion $i = i_{m-1}q^{m-1} + \cdots + i_1 q + i_0$ with $0 \leq i_j \leq q-1$, and define the vector $\overline{i} = (i_{m-1}, i_{m-2}, \ldots, i_0)$. Define $\overline{A} < \overline{B}$ if and only if $A < B$ as integers.  Cyclic shifts satisfy
$$
\overline{iq^j \bmod N} = (i_{m-1-j}, \dots, i_0, i_{m-1}, \dots, i_{m-j}) \quad \text{for} \quad 1 \leq j \leq m-1.
$$
From \eqref{Delta}, the vector representation of $\lambda\delta$ is
\begin{equation}\label{delta2}
\overline{\lambda\delta} = (\underbrace{0,\dots,0}_{\ell_1}, q-1-\lambda\ell_0,\underbrace{q-1,\dots,q-1}_{m-1-\ell_1}).
\end{equation}
The dimension of the BCH code $\C_{(q, m, \lambda, \ell_0, \ell_1)}$ is given by
\begin{equation}\label{eq:dim}
\dim(\C_{(q,m,\lambda,\ell_0,\ell_1)}) = \left| \left\{ 
1 \leq i \leq N : 
\begin{array}{c} 
\overline{iq^j \bmod N} > \overline{\lambda\delta} \\ 
\forall ~0 \leq j \leq m-1, \\ 
\lambda \mid i 
\end{array}
\right\} \right|+\left| C_{\lambda \delta}^{(q, N)}   \right|.
\end{equation}
When $\ell_0=\ell_1=0$, we have $\lambda \delta=N$, implying $\C_{(q, m, \lambda, \ell_0, \ell_1)}$ is an $[n, 1, n]$ repetition code.  

\begin{lemma}\label{lem:orbit}
Let $1\leq \ell_1\leq m-1$ or $\ell_1=0$ and $0< \ell_0<\frac{q-1}{\lambda}$. Then $\left| C_{\lambda\delta}^{(q, N)} \right| = m$.	
\end{lemma}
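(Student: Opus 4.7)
The plan is to reformulate $|C_{\lambda\delta}^{(q,N)}|$ in terms of a cyclic symmetry of the vector $\overline{\lambda\delta}$, then do a short case analysis. Recall that $|C_{\lambda\delta}^{(q,N)}|$ is the smallest positive integer $t$ with $\lambda\delta\cdot q^t\equiv \lambda\delta\pmod N$, and this $t$ divides $m$. By the cyclic-shift identity for $q$-adic expansions recalled in the excerpt (namely $\overline{iq^j\bmod N}$ is the cyclic left shift of $\overline{i}$ by $j$ positions), this is equivalent to saying $t$ is the smallest positive integer for which the cyclic left shift of $\overline{\lambda\delta}$ by $t$ positions equals $\overline{\lambda\delta}$. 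So it suffices to show that no nontrivial proper cyclic shift fixes the tuple $\overline{\lambda\delta}$ given in \eqref{delta2}.

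Set $a=q-1-\lambda\ell_0$ and split into two cases on whether $\ell_0=0$. In the case $0<\ell_0<\frac{q-1}{\lambda}$, the value $a$ satisfies $0<a<q-1$, so it is distinct from both of the other entries of $\overline{\lambda\delta}$ (which are $0$ and $q-1$). Hence $a$ appears exactly once in the tuple, namely at coordinate $m-1-\ell_1$. Any cyclic left shift by $t$ with $1\le t\le m-1$ moves this unique coordinate to a different position, so no nontrivial cyclic shift can fix $\overline{\lambda\delta}$. Therefore the orbit size is $m$.

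In the remaining case, $\ell_0=0$ and $1\le \ell_1\le m-1$, so $a=q-1$ and
$$\overline{\lambda\delta}=(\underbrace{0,\dots,0}_{\ell_1},\underbrace{q-1,\dots,q-1}_{m-\ell_1}).$$
Label tuple positions $0,1,\dots,m-1$ from left to right; the set $Z$ of coordinates whose value is $0$ is the interval $Z=\{0,1,\dots,\ell_1-1\}$. Cyclic left shift by $t$ sends $Z$ to the interval $\{t,t+1,\dots,t+\ell_1-1\}\pmod m$. Two such cyclic intervals of the common length $\ell_1\le m-1$ coincide as subsets of $\mathbb{Z}/m\mathbb{Z}$ only when they start at the same point, i.e.\ when $t\equiv 0\pmod m$. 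Consequently no $1\le t\le m-1$ fixes $\overline{\lambda\delta}$, and again the orbit size equals $m$.

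The main (and essentially the only) obstacle is being careful with the combinatorial claim that two contiguous cyclic intervals of the same length in $\mathbb{Z}/m\mathbb{Z}$ coincide only if they have the same starting point when their length is strictly less than $m$; this can be argued in one line by comparing their left endpoints, but it is the point that has to be stated explicitly. Once these two cases are handled, the conclusion $|C_{\lambda\delta}^{(q,N)}|=m$ is immediate under the hypotheses of the lemma.
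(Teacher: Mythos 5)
Your proof is correct, and it proves exactly what the lemma asserts, but it does so by a slightly different mechanism than the paper. The paper computes each shifted vector $\overline{\lambda\delta q^j \bmod N}$ explicitly and shows the \emph{strict inequality} $\overline{\lambda\delta q^j \bmod N} > \overline{\lambda\delta}$ for all $1\le j\le m-1$; distinctness of the shifts (hence orbit size $m$) falls out as a consequence, and the strict inequality additionally identifies $\lambda\delta$ as the leader of its coset, which dovetails with the way the coset $C_{\lambda\delta}^{(q,N)}$ is split off in the dimension formula \eqref{eq:dim}. You instead argue pure non-fixedness: when $\ell_0>0$ the digit $q-1-\lambda\ell_0$ lies strictly between $0$ and $q-1$ and hence occurs exactly once in $\overline{\lambda\delta}$, so no nontrivial cyclic shift can fix the tuple; when $\ell_0=0$ and $1\le\ell_1\le m-1$ the zero positions form a nonempty proper cyclic interval of $\Z/m\Z$, which is moved by every nontrivial rotation. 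Both case splits are exhaustive under the lemma's hypotheses and both arguments are sound; yours is marginally shorter and avoids any lexicographic comparison, at the cost of not delivering the coset-leader by-product that the paper's version provides for free.
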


\begin{proof} 
We verify $\overline{\lambda\delta q^j \bmod N} > \overline{\lambda\delta}$ for $1 \leq j \leq m-1$ via case analysis:

\textit{Case 1:} $\ell_1 = 0$ and $0 < \ell_0 < (q-1)/\lambda$. By \eqref{delta2},
$$\overline{\lambda\delta} = ( q-1-\lambda\ell_0,\underbrace{q-1,\dots,q-1}_{m-1}).$$
For $1\leq j\leq  m-1$,
\begin{align*}
\overline{\lambda\delta q^j \bmod N} &= \big( \underbrace{q-1,\dots,q-1}_{m-j},\ 
q-1-\lambda\ell_0,\ 
\underbrace{q-1,\dots,q-1}_{j-1} \big)\\
&>\overline{\lambda\delta}.
\end{align*}

\textit{Case 2:} $1 \leq \ell_1 \leq m-1$. By \eqref{delta2},
$$
\overline{\lambda\delta} = (\underbrace{0,\dots,0}_{\ell_1}, q-1-\lambda\ell_0,\underbrace{q-1,\dots,q-1}_{m-1-\ell_1}).
$$
For $1\leq j\leq  \ell_1$,
\begin{align*}
\overline{\lambda\delta q^j \bmod N} &= \big( \underbrace{0,\dots,0}_{\ell_1-j},\ 
q-1-\lambda\ell_0,\ 
\underbrace{q-1,\dots,q-1}_{m-1-\ell_1},\ 
\underbrace{0,\dots,0}_{j} \big)\\
&>\overline{\lambda\delta}.
\end{align*}
For $j = \ell_1 + 1 + j'$ with $0 \leq j' \leq m-2-\ell_1$,
\begin{align*}
\overline{\lambda\delta q^j \bmod N} &= \big( \underbrace{q-1,\dots,q-1}_{m-1-\ell_1-j'},\ 
\underbrace{0,\dots,0}_{\ell_1},\ 
q-1-\lambda\ell_0,\ 
\underbrace{q-1,\dots,q-1}_{j'} \big)\\
&>\overline{\lambda\delta}.
\end{align*}
The result follows in both cases. This completes the proof.
\end{proof}

Under Lemma~\ref{lem:orbit}'s conditions, \eqref{eq:dim} simplifies to

\begin{equation}\label{eq:dim2}
	\dim(\C_{(q,m,\lambda,\ell_0,\ell_1)}) = \left| \left\{ 
1 \leq i \leq N : 
\begin{array}{c} 
\overline{iq^j \bmod N} > \overline{\lambda\delta} \\ 
\forall ~0 \leq j \leq m-1, \\ 
\lambda \mid i 
\end{array}
\right\} \right|+m.
\end{equation}

\begin{theorem}\label{THM7}
Let $\ell_1=0$ and $0< \ell_0<\frac{q-1}{\lambda}$. Then 
$ \dim(\C_{(q,m,\lambda,\ell_0,\ell_1)})= \lambda^{m-1}(\ell_0)^{m}+m$.	
\end{theorem}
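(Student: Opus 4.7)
The plan is to evaluate the formula~\eqref{eq:dim2} directly under the hypothesis $\ell_1=0$. Since Lemma~\ref{lem:orbit} already supplies $|C_{\lambda\delta}^{(q,N)}|=m$, it suffices to show that the set
\[
\mathcal{S}:=\Bigl\{\,1\le i\le N \;:\; \lambda\mid i,\; \overline{iq^{j}\bmod N}>\overline{\lambda\delta}\ \text{for all } 0\le j\le m-1\,\Bigr\}
\]
has cardinality $\lambda^{m-1}\ell_0^{m}$. Setting $\ell_1=0$ in~\eqref{delta2} gives $\overline{\lambda\delta}=(q-1-\lambda\ell_0,q-1,\dots,q-1)$.

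The first step is to recast the lexicographic condition as a pointwise digit condition. If a vector $\overline{b}=(b_{m-1},\dots,b_0)\in\{0,\dots,q-1\}^{m}$ has leading entry $b_{m-1}=q-1-\lambda\ell_0$, then its remaining coordinates cannot strictly exceed the all-$(q-1)$ tail of $\overline{\lambda\delta}$; hence $\overline{b}>\overline{\lambda\delta}$ forces $b_{m-1}\ge q-\lambda\ell_0$. Applying this to every cyclic shift $\overline{iq^{j}\bmod N}$, whose leading coordinate is $i_{m-1-j}$, yields the equivalent digit description: $\mathcal{S}$ consists of those multiples of $\lambda$ whose $q$-adic digits all lie in $\{q-\lambda\ell_0,q-\lambda\ell_0+1,\dots,q-1\}$. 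Since $\ell_0<(q-1)/\lambda$ forces $q-\lambda\ell_0\ge 2$, every such digit is positive and every admissible tuple yields an $i$ in the range $1\le i\le N$ (the all-$(q-1)$ tuple gives $i=N$, which is a multiple of $\lambda$ whose cyclic-shift orbit is constant and trivially satisfies the inequality).

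The second step is to count how many of the $(\lambda\ell_0)^{m}$ admissible tuples satisfy $\lambda\mid i$. Because $\lambda\mid(q-1)$, we have $q\equiv 1\pmod{\lambda}$, so $i\equiv i_{0}+i_{1}+\cdots+i_{m-1}\pmod{\lambda}$. The admissible digit interval consists of $\lambda\ell_0$ consecutive integers and therefore contains exactly $\ell_0$ elements in each residue class modulo $\lambda$. Letting $\omega$ be a primitive $\lambda$-th root of unity, orthogonality yields
\[
|\mathcal{S}|=\frac{1}{\lambda}\sum_{k=0}^{\lambda-1}\Bigl(\sum_{a=q-\lambda\ell_0}^{q-1}\omega^{ka}\Bigr)^{\!m}=\frac{(\lambda\ell_0)^{m}}{\lambda}=\lambda^{m-1}\ell_0^{m},
\]
because the inner sum vanishes for each $1\le k\le\lambda-1$ (every residue appears $\ell_0$ times, so the sum equals $\ell_0\sum_{r=0}^{\lambda-1}\omega^{kr}=0$) and equals $\lambda\ell_0$ for $k=0$.

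Combining this count with Lemma~\ref{lem:orbit} gives $\dim(\C_{(q,m,\lambda,\ell_0,\ell_1)})=\lambda^{m-1}\ell_0^{m}+m$. The main obstacle is the lexicographic-to-digit translation in the first step, where one must notice that equality at the leading coordinate of $\overline{\lambda\delta}$ cannot be compensated later because the tail is already maximal; once that observation is in hand, the divisibility count reduces to a routine character-orthogonality computation.
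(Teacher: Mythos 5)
Your proposal is correct and follows essentially the same route as the paper: it reduces the lexicographic condition on all cyclic shifts of $\overline{i}$ to the pointwise condition that every $q$-adic digit lies in $\{q-\lambda\ell_0,\dots,q-1\}$, and then counts the multiples of $\lambda$ among the resulting $(\lambda\ell_0)^m$ digit tuples. The only cosmetic difference is in the final count, where you use a roots-of-unity filter while the paper fixes $i_1,\dots,i_{m-1}$ and observes that exactly $\ell_0$ choices of $i_0$ fall in the required residue class; both rest on the same fact that an interval of $\lambda\ell_0$ consecutive integers meets each residue class modulo $\lambda$ exactly $\ell_0$ times.
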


\begin{proof}
 By \eqref{delta2},
$$\overline{\lambda\delta} = ( q-1-\lambda\ell_0,\underbrace{q-1,\dots,q-1}_{m-1}).$$
Let $\overline{i}=(i_{m-1}, i_{m-2}, \ldots, i_0)$ with $0\leq i_j\leq q-1$. \begin{itemize}
	\item If there exists $0\leq h\leq m-1$ such that $i_h<q-1-\lambda \ell_0$, then $$\overline{i q^{m-1-h} \bmod N}=(i_h, *, \ldots, *)<\overline{\lambda\delta}.$$  
	\item If $q-\lambda \ell_0\leq i_j \leq q-1$ for all $j$, then $\overline{i q^j \bmod N}> \overline{\lambda\delta}$ for all $j$.
	\item If $\exists~ h$ with $i_h = q-1 - \lambda \ell_0$, then $\overline{i q^j \bmod N}\geq \overline{\lambda\delta}$ for all $0\leq j \leq m-1$ if and only if $i_j=q-1$ for $j\neq h$, i.e., $i\in C_{\lambda \delta}^{(q, N)}$.
\end{itemize}
Therefore, 
\begin{align*}
B&=\left\{ 
1 \leq i \leq N : 
\begin{array}{c} 
\overline{iq^j \bmod N} > \overline{\lambda\delta} \\ 
\forall ~0 \leq j \leq m-1, \\ 
\lambda \mid i 
\end{array}
\right\}\\
&=\left\{ 
1 \leq i \leq N : 
\begin{array}{c} 
\overline{i}=(i_{m-1}, i_{m-2}, \ldots, i_0) \\ 
q-\lambda \ell_0 \leq  i_j \leq q-1\text{ for all } j, \\ 
\lambda \mid i 
\end{array}
\right\}.	
\end{align*}
To compute $|B|$, fix $q-\lambda \ell_0 \leq i_{1},\cdots, i_{m-1}\leq q-1$, and let $i_1+i_2+\cdots+i_{m-1}\equiv s~({\rm mod}~\lambda)$, $0\leq s\leq \lambda-1$.  Then $i_0+i_1+\cdots+i_{m-1} \equiv 0~({\rm mod}~\lambda)$ if and only if $i_0\equiv \lambda -s~({\rm mod}~\lambda)$. For each $0\leq s\leq \lambda-1$, 
$$|\{q-\lambda \ell_0\leq i_0\leq q-1:~i_0\equiv \lambda -s~({\rm mod}~\lambda)    \}|=\ell_0.$$
Since there are $(\lambda \ell_0)^{m-1}$ choices for $(i_1, \dots, i_{m-1})$,
 $$B=(\lambda \ell_0)^{m-1} \times \ell_0=\lambda^{m-1}(\ell_0)^{m}.$$ The result follows from \eqref{eq:dim2} and Lemma~\ref{lem:orbit}. This completes the proof.
\end{proof}

The Griesmer bound for an $[n, k, d]$ linear code over $\gf(q)$, as established in \cite{Griesmer}, states that
$$n\geq \sum_{i=0}^{k-1} \left \lceil  \frac{d}{q^i}   \right\rceil.$$
Consider the case where $q>2$, $m\geq 2$, $\lambda =1$, $\ell_0=1$ and $\ell_0=0$. By combining Theorem \ref{Distance} and Theorem \ref{THM7}, the BCH code $\C_{(q, m, \lambda, \ell_0, \ell_1)}$ has parameters $[q^m-1, 1+m, (q-1)q^{m-1}-1]$. It is easily verified that 
$$\sum_{i=0}^{m} \left \lceil  \frac{(q-1)q^{m-1}-1}{q^i}   \right\rceil=(q-1)q^{m-1}-1+\sum_{i=1}^{m-1} (q-1)q^{m-1-i}+1=q^m-1.$$
Therefore, the BCH code $\mathcal{C}_{(q, m, 1, 1, 0)}$ attains the Griesmer bound for all $q > 2$ and $m \geq 2$. Using the computational algebra system MAGMA, we verified the assertions of Theorem \ref{THM7}. The parameters of the BCH code $\mathcal{C}_{(q, m, \lambda, \ell_0, 0)}$ are tabulated in Table \ref{table1}.

\begin{table*}
{\caption{\rm Parameters of $\C_{(q, m, \lambda, \ell_0, 0)}$ for $1\leq \ell_0<(q-1)/\lambda$
}\label{table1}
\begin{center}
\begin{tabular}{ccccccc}\hline
$q$  & $m$ & $\lambda$  &  $\ell_0$  & Parameters & Optimality \\ \hline
$3$ & $2$ &  $1$ & $1$  & $[8,3,5]$ & Optimal\\ \hline
$3$ & $3$ &  $1$ & $1$  & $[26,4,17]$ & Optimal\\ \hline
$3$ & $4$ &  $1$ & $1$  & $[80,5,53]$ & Optimal\\ \hline
$4$ & $2$ &  $1$ & $1$  & $[15,3,11]$ & Optimal\\ \hline
$4$ & $2$ &  $1$ & $2$  & $[15,6,7]$ & $d_{\rm optimal}=8$\\ \hline
$4$ & $3$ &  $1$ & $1$  & $[63,4,47]$ & Optimal\\ \hline
$4$ & $3$ &  $1$ & $2$  & $[63,11,31]$ & $d_{\rm best}=35$\\ \hline
$5$ & $2$ &  $1$ & $1$  & $[24,3,19]$ & Optimal\\ \hline
$5$ & $2$ &  $1$ & $2$  & $[24,6,14]$ & $d_{\rm optimal}=15$\\ \hline
$5$ & $2$ &  $1$ & $3$  & $[24,11,9]$ & $d_{\rm best}=10$\\ \hline
$5$ & $2$ &  $2$ & $1$  & $[12,4,7]$ & $d_{\rm optimal}=8$\\ \hline
$5$ & $3$ &  $1$ & $1$  & $[124,4,99]$ & Optimal\\ \hline
$5$ & $3$ &  $1$ & $2$  & $[124,11,74]$ & $d_{\rm best}=83$\\ \hline
$5$ & $3$ &  $1$ & $3$  & $[124,30,49]$ & $d_{\rm best}=53$\\ \hline
$5$ & $3$ &  $2$ & $1$  & $[62,7,37]$ & $d_{\rm best}=42$\\ \hline
\end{tabular}
\end{center}}
\end{table*}

A \textit{run} in $\overline{i}=(i_{m-1}, i_{m-2}, \dots, i_0)$ is a maximal sequence of consecutive zeros. \textit{Straight runs} do not wrap around, while \textit{circular runs} do. Let $\mathtt{Run}(\overline{i})$ denote the maximal run length (considering circular or straight runs).  For example, $\mathbf{s}=(0,1,0,0,1,0,0)$ has $\mathtt{Run}(\mathbf{s})=3$.

\begin{lemma}\label{lem::8}
Let $1\leq \ell_1\leq m-1$ and $1\leq i\leq N$. If $\overline{i}$ has a run of length $\ell > \ell_1$, then there exists $0\leq j\leq m-1$ such that $\overline{i q^j \bmod N}< \overline{\lambda \delta}$.	
\end{lemma}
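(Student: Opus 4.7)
The strategy is a direct alignment argument. Observe that $\overline{\lambda\delta}$ begins with exactly $\ell_1$ zeros followed by the entry $q-1-\lambda\ell_0$, which is strictly positive because $\ell_0<(q-1)/\lambda$ forces $\lambda\ell_0\leq q-1-\lambda$. Since the lexicographic order on the coordinate sequence from the most significant digit downwards coincides with the integer order on $\overline{\cdot}$, the whole lemma reduces to exhibiting a single cyclic shift $\overline{iq^j\bmod N}$ whose top $\ell_1+1$ coordinates are all zero; such a shift agrees with $\overline{\lambda\delta}$ on the first $\ell_1$ positions and is strictly smaller at position $\ell_1$ (it holds $0$ against $q-1-\lambda\ell_0>0$).

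To build this shift, I would locate the run explicitly. Pick an index $s\in\{0,1,\ldots,m-1\}$ with $i_s=i_{s-1}=\cdots=i_{s-\ell+1}=0$, reading subscripts modulo $m$ so that circular and straight runs are treated on the same footing. Set $j=(m-1-s)\bmod m$. Invoking the cyclic shift identity $\overline{iq^j\bmod N}=(i_{m-1-j},\ldots,i_0,i_{m-1},\ldots,i_{m-j})$ already recalled earlier in the paper, the first $\ell$ coordinates of $\overline{iq^j\bmod N}$ are precisely $i_s,i_{s-1},\ldots,i_{s-\ell+1}$, all of which are zero. Since $\ell\geq\ell_1+1$ by hypothesis, the leading $\ell_1+1$ coordinates of $\overline{iq^j\bmod N}$ vanish, and the desired inequality $\overline{iq^j\bmod N}<\overline{\lambda\delta}$ follows from the observation in the first paragraph.

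The only bookkeeping point is the circular case, but once subscripts are taken modulo $m$ it requires no separate argument: the cyclic shift operator mirrors the circular structure of the run exactly, and the boundary case $s=m-1$ (where $j=0$) is trivial. I therefore do not expect any genuine obstacle; the lemma essentially reduces to this one-step alignment calculation, combined with the explicit form of $\overline{\lambda\delta}$ given in \eqref{delta2}.
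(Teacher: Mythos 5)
Your argument is correct and follows essentially the same route as the paper's own (very terse) proof: rotate the run of length $\ell>\ell_1$ to the most significant positions via the cyclic shift identity, then compare against $\overline{\lambda\delta}$, which has exactly $\ell_1$ leading zeros followed by the positive entry $q-1-\lambda\ell_0$. Your version merely makes explicit the choice of shift index $j$ and the handling of circular runs, which the paper leaves implicit.
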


\begin{proof}
There exists $0\leq j\leq m-1$ such that
 $$\overline{i q^j \bmod N}=(\underbrace{0,\ldots,0}_{\ell}, *,*,\ldots, *).$$ Since $\ell>\ell_1$, we obtain $\overline{i q^j \bmod N}<\overline{\lambda \delta}$. This completes the proof.
\end{proof}

\begin{lemma}\label{lem::9}
Let $1\leq \ell_1\leq m-1$ and $1\leq i\leq N$. If $\mathtt{Run}(\overline{i})<\ell_1$, then $\overline{i q^j \bmod N}> \overline{\lambda \delta}$ for all $0\leq j\leq m-1$.	
\end{lemma}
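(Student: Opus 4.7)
The plan is to exploit the fact that $\overline{\lambda\delta}$ has $\ell_1$ leading zeros in its high-order positions, and that $\lambda\delta$ is strictly smaller than $q^{m-\ell_1}$. Consequently, any cyclic shift of $\overline{i}$ whose top $\ell_1$ coordinates contain at least one nonzero entry automatically represents an integer that exceeds $\lambda\delta$, and the vector inequality follows. So the task reduces to showing that the hypothesis $\mathtt{Run}(\overline{i}) < \ell_1$ precludes the top $\ell_1$ coordinates of $\overline{iq^j \bmod N}$ from being simultaneously zero for any $j$.

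First, I would record the numerical bound
\begin{equation*}
\lambda\delta \;=\; (q-\lambda\ell_0)q^{m-1-\ell_1}-1 \;\leq\; q\cdot q^{m-1-\ell_1}-1 \;<\; q^{m-\ell_1},
\end{equation*}
so any integer $t\in \Z_N$ whose $q$-adic expansion has a nonzero digit in a position $\geq m-\ell_1$ satisfies $t\geq q^{m-\ell_1}>\lambda\delta$, hence $\overline{t}>\overline{\lambda\delta}$. Next, using the cyclic-shift formula stated before \eqref{delta2},
\begin{equation*}
\overline{iq^j\bmod N}=(i_{m-1-j},\ldots,i_0,i_{m-1},\ldots,i_{m-j}),
\end{equation*}
the top $\ell_1$ coordinates of $\overline{iq^j\bmod N}$ form exactly the block $(i_{(m-1-j)\bmod m},\, i_{(m-2-j)\bmod m},\, \ldots,\, i_{(m-\ell_1-j)\bmod m})$ of $\ell_1$ cyclically consecutive entries of $\overline{i}$.

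Finally, I would argue by contradiction: if all entries of this block were zero, then $\overline{i}$ would contain a block of $\ell_1$ cyclically consecutive zeros, which would be part of a (possibly circular) run of zeros of length at least $\ell_1$, contradicting $\mathtt{Run}(\overline{i})<\ell_1$. Therefore at least one of the top $\ell_1$ coordinates of $\overline{iq^j\bmod N}$ is nonzero, the integer $iq^j\bmod N$ is at least $q^{m-\ell_1}>\lambda\delta$, and so $\overline{iq^j\bmod N}>\overline{\lambda\delta}$ for every $0\leq j\leq m-1$. There is no serious obstacle here; the only point requiring care is correctly identifying the top $\ell_1$ entries of the shifted vector with a cyclic window in $\overline{i}$ and invoking the definition of $\mathtt{Run}$, which explicitly accounts for both straight and circular runs.
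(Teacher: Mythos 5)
Your proof is correct and rests on the same core observation as the paper's: since $\mathtt{Run}(\overline{i})<\ell_1$, no cyclic shift of $\overline{i}$ can have $\ell_1$ leading zeros, while $\lambda\delta=(q-\lambda\ell_0)q^{m-1-\ell_1}-1<q^{m-\ell_1}$. The paper phrases this by noting that the \emph{minimal} cyclic shift already exceeds $\overline{\lambda\delta}$, whereas you verify each shift directly via its top $\ell_1$-window; this is only a cosmetic difference.
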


\begin{proof}
Suppose $\mathtt{Run}(\overline{i})=\ell$, then the minimal cyclic shift of $\overline{i}$ has the form $$\mathbf{s}=(\underbrace{0,\ldots,0}_{\ell}, i',*,\ldots, *),$$ where  $1\leq i'\leq q-1$. Since $\ell <\ell_1$, we have $\mathbf{s}>\overline{\lambda \delta}$. The desired result follows. This completes the proof.
\end{proof}

Define
$$\cB=\left\{ 1\leq i\leq N:~ \mathtt{Run}(\overline{i})= \ell_1, \begin{array}{c} 
\overline{iq^j \bmod N} > \overline{\lambda\delta} \\ 
\forall ~0 \leq j \leq m-1, \\ 
\lambda \mid i 
\end{array}  \right\},$$
and for $0\leq \ell\leq m-1$, 
$$
\cS_{\ell, \lambda}=\left\{ 1\leq i\leq N:~ \mathtt{Run}(\overline{i})\leq \ell,~\lambda\mid i  \right\}.
$$
 Lemma \ref{lem::8} and Lemma \ref{lem::9} imply for $1\leq \ell_1\leq m-1$,
$$\left\{ 
1 \leq i \leq N : 
\begin{array}{c} 
\overline{iq^j \bmod N} > \overline{\lambda\delta} \\ 
\forall ~0 \leq j \leq m-1, \\ 
\lambda \mid i 
\end{array}
\right\}=\cB~\cup \cS_{\ell_1-1}. $$
Thus by \eqref{eq:dim2}, 
\begin{equation}\label{eq:dim3}
	\dim(\C_{(q, m, \lambda, \ell_0, \ell_1)})=|\cB|+|\cS_{\ell_1-1, \lambda}|+m.
\end{equation}


 \begin{lemma}\cite{mann}\label{LEM::10}
 Let $\C$ be the BCH code over $\gf(q)$ of length $q^m-1$ and designed distance $q^{m-\ell}$ for $1\leq \ell \leq m-1$. Then 
 $$\dim(\C)=q^m-1-\sum_{i=1}^{\lfloor  \frac{m}{\ell+1}    \rfloor} (-1)^{i-1} \frac{m(q-1)^i}{i} \binom{m-i\ell-1}{i-1} q^{m-i(\ell+1)}.$$	
 \end{lemma}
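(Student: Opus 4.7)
My plan is to convert the dimension of $\C$ into a counting problem on $q$-ary circular digit strings and then attack that count by an inclusion--exclusion whose atoms mark the right endpoints of long zero runs.

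First I would use the standard identity $\dim(\C)=(q^m-1)-|D|$, where $D$ is the union of $q$-cyclotomic cosets modulo $q^m-1$ of the integers $1,2,\dots,q^{m-\ell}-1$, so the task reduces to computing $|D|$. Since multiplication by $q$ cyclically permutes the $q$-adic digit vector $\overline{j}=(j_{m-1},\dots,j_0)$, and since $1\le j\le q^{m-\ell}-1$ is equivalent to the top $\ell$ entries of $\overline{j}$ being zero, an integer $j\in[1,q^m-1]$ belongs to $D$ if and only if the circular word $\overline{j}$ contains a run of at least $\ell$ consecutive zeros. This reformulates the problem as counting $q$-ary $m$-strings (other than the all-zero one) with a cyclic zero run of length $\ge\ell$.

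The crux of the argument---and what I expect to be the main obstacle---is encoding ``contains a long zero run'' by good local atoms. My choice is to mark the \emph{right end} of a maximal zero run: for $0\le k\le m-1$, set
\[
B_k=\{\overline{j}:\ j_k=j_{k+1}=\dots=j_{k+\ell-1}=0,\ j_{k+\ell}\ne 0\},
\]
with indices taken modulo $m$. Each nonzero $\overline{j}$ with a cyclic zero run of length $\ge\ell$ has, for each of its maximal zero runs of length $\ge\ell$, a unique terminating index $k$ with $\overline{j}\in B_k$; conversely $B_k\subseteq D$ for every $k$. Hence $D=\bigcup_{k=0}^{m-1}B_k$, and the all-zero string is automatically excluded by the nonzero condition at position $k+\ell$. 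A less careful marker (for instance the leftmost zero of a run) would either double-count the string or overlap awkwardly under intersection.

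With these atoms in place, I would finish by inclusion--exclusion. A direct conflict check shows that $\bigcap_{k\in S}B_k$ is nonempty precisely when every two distinct indices in $S$ lie at cyclic distance $\ge\ell+1$ on $\Z_m$---otherwise a forced zero from one index coincides with the forced nonzero of another---and in that ``well-spread'' case the $(\ell+1)|S|$ constrained positions are pairwise disjoint, yielding $|\bigcap_{k\in S}B_k|=(q-1)^{|S|}\,q^{m-(\ell+1)|S|}$. The number of well-spread $i$-subsets of $\Z_m$ is the standard cycle-composition count $\frac{m}{i}\binom{m-i\ell-1}{i-1}$, obtained by counting ordered compositions $m=g_1+\cdots+g_i$ with $g_j\ge\ell+1$ (which gives $\binom{m-i\ell-1}{i-1}$) and double-counting (configuration, distinguished point) pairs; it vanishes once $i>\lfloor m/(\ell+1)\rfloor$. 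Substituting these ingredients into inclusion--exclusion expresses $|D|$ exactly as the sum appearing in the lemma, and subtracting from $q^m-1$ concludes.
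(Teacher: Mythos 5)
Your argument is correct. Note that the paper offers no proof of this lemma at all---it is quoted verbatim from Mann's 1962 paper---so there is nothing internal to compare against; your derivation (reduce $\dim(\C)=q^m-1-|D|$ to counting circular $q$-ary words with a cyclic zero run of length $\geq\ell$, mark each long maximal run by the unique index $k$ with $j_k=\dots=j_{k+\ell-1}=0$ and $j_{k+\ell}\neq 0$, and run inclusion--exclusion over well-spread index sets counted by $\frac{m}{i}\binom{m-i\ell-1}{i-1}$) is the classical route and correctly reproduces the stated formula, including the truncation of the sum at $i=\lfloor m/(\ell+1)\rfloor$ where the well-spread configurations cease to exist.
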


\begin{lemma}\label{LEM:S_ell}
For $0 \leq \ell \leq m-2$,
$$|\cS_{\ell,1}|= q^m-1-\sum_{i=1}^{\lfloor  \frac{m}{\ell+2}    \rfloor} (-1)^{i-1} \frac{m(q-1)^i}{i} \binom{m-i(\ell+1)-1}{i-1} q^{m-i(\ell+2)}.$$
 \end{lemma}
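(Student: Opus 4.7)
The plan is to identify $|\cS_{\ell,1}|$ with the dimension of a specific narrow-sense BCH code and then invoke Lemma~\ref{LEM::10}. First, since $\lambda = 1$, the divisibility condition in the definition of $\cS_{\ell,1}$ is vacuous, so
$$\cS_{\ell,1} = \{1 \leq i \leq N : \mathtt{Run}(\overline{i}) \leq \ell\}.$$

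Next, I would introduce the auxiliary narrow-sense BCH code $\C'$ over $\gf(q)$ of length $N = q^m-1$ and designed distance $\delta' = q^{m-\ell-1}$ (this is a valid designed distance since $0 \leq \ell \leq m-2$ gives $q \leq \delta' \leq q^{m-1}$). Writing $\dim \C' = N - |T|$ where $T = \bigcup_{s=1}^{\delta'-1} C_s^{(q,N)}$, and passing to the complement in $\{0,1,\dots,N-1\}$, one gets
$$\dim \C' = 1 + \bigl|\{1 \leq i \leq N-1 : iq^j \bmod N \geq \delta' \text{ for all } 0 \leq j \leq m-1\}\bigr|,$$
where the ``$+1$'' accounts for the residue $0$.

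The substantive step, which is really the only obstacle, is to translate the inequality condition into a condition on runs of zeros in $\overline{i}$. An integer $a \in \{0,\dots,N-1\}$ satisfies $a < q^{m-\ell-1}$ precisely when the top $\ell+1$ digits $a_{m-1}, a_{m-2}, \dots, a_{m-\ell-1}$ of its $q$-adic expansion all vanish. Since $\overline{iq^j \bmod N}$ is the cyclic shift of $\overline{i}$ by $j$ positions, the existence of some $j$ with $iq^j \bmod N < q^{m-\ell-1}$ is equivalent to $\overline{i}$ having $\ell+1$ consecutive zeros in the circular sense, i.e., $\mathtt{Run}(\overline{i}) \geq \ell+1$. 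Thus the inner condition above is equivalent to $\mathtt{Run}(\overline{i}) \leq \ell$.

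To conclude, I would note that $i = N$ corresponds to $\overline{N} = (q-1,\dots,q-1)$, hence $\mathtt{Run}(\overline{N}) = 0 \leq \ell$, so $N \in \cS_{\ell,1}$. This single extra contribution exactly compensates for the separate ``$+1$'' coming from the residue $0$ in the BCH dimension formula above, yielding $|\cS_{\ell,1}| = \dim \C'$. Finally, applying Lemma~\ref{LEM::10} with its parameter $\ell$ replaced by $\ell+1$ (which is legitimate since $1 \leq \ell+1 \leq m-1$) produces exactly the claimed expression, completing the proof.
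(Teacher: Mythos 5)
Your proposal is correct and follows essentially the same route as the paper: identify $|\cS_{\ell,1}|$ with the dimension of the narrow-sense BCH code of length $q^m-1$ and designed distance $q^{m-\ell-1}$ via the run-length characterization of the non-defining-set cosets, then apply Lemma~\ref{LEM::10} with its parameter shifted to $\ell+1$. Your explicit bookkeeping of the $i=N$ versus residue-$0$ contribution is a detail the paper glosses over, but the argument is the same.
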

 
\begin{proof}
The BCH code $\C$ over $\gf(q)$ of length $q^m-1$ and designed distance  $q^{m-\ell-1}$ satisfies $\dim(\C) = |\cS_{\ell,1}|$ because:
\begin{itemize}
	\item If $\overline{i}$ has a run of length $\ell'\geq \ell+1$, then there exists $0\leq j\leq m-1$ such that $$\overline{i q^j \bmod N}=(\underbrace{0,\ldots,0}_{\ell'}, *,*,\ldots, *)\leq  \overline{q^{m-\ell-1}-1}.$$
	\item If $\mathtt{Run}(\overline{i})=\ell'\leq \ell$, all cyclic shifts exceed $\overline{q^{m-\ell-1}-1}$.
	\end{itemize}
 The desired result follows from Lemma \ref{LEM::10}. This completes the proof. 
\end{proof}

\begin{lemma}\label{LEM:11}
For $t \geq 1$ and $\lambda \mid (q-1)$,
$$\left|\left\{(i_1,i_2,\ldots, i_t):~\sum_{i=1}^t i_i \equiv 0~({\rm mod}~\lambda), 1\leq i_j\leq q-1 \right \}\right|=\frac{(q-1)^t}{\lambda}.$$	
\end{lemma}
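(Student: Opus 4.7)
The plan is to count by fixing the first $t-1$ coordinates freely and then counting the number of admissible values for the last coordinate. The key preliminary observation is that since $\lambda \mid (q-1)$, the set $\{1, 2, \ldots, q-1\}$ splits evenly modulo $\lambda$: for every residue class $r \in \{0, 1, \ldots, \lambda-1\}$, the number of $i \in \{1, \ldots, q-1\}$ with $i \equiv r \pmod{\lambda}$ equals exactly $(q-1)/\lambda$. This follows immediately from the fact that $\{1, \ldots, q-1\}$ is a union of $(q-1)/\lambda$ complete residue systems modulo $\lambda$.

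With this in hand, I would proceed as follows. Let $(i_1, i_2, \ldots, i_{t-1})$ be an arbitrary tuple with $1 \leq i_j \leq q-1$, and set $s \equiv \sum_{j=1}^{t-1} i_j \pmod{\lambda}$ with $0 \leq s \leq \lambda -1$. The constraint $\sum_{j=1}^{t} i_j \equiv 0 \pmod{\lambda}$ is then equivalent to $i_t \equiv -s \pmod{\lambda}$. By the preliminary observation applied to the residue class $-s \bmod \lambda$, there are exactly $(q-1)/\lambda$ admissible values of $i_t$ in $\{1, \ldots, q-1\}$, independent of the particular choice of $(i_1, \ldots, i_{t-1})$.

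Multiplying the $(q-1)^{t-1}$ free choices for the first $t-1$ coordinates by the $(q-1)/\lambda$ constrained choices for $i_t$ yields the claimed count $(q-1)^t/\lambda$. There is essentially no obstacle here; the argument is a one-line product once the uniform distribution of $\{1,\ldots,q-1\}$ modulo $\lambda$ is noted. An alternative proof via the orthogonality relation $\sum_{\chi} \chi(\sum_j i_j) = \lambda \cdot \mathbf{1}[\sum_j i_j \equiv 0 \pmod{\lambda}]$ over characters $\chi$ of $\Z/\lambda\Z$ and the factorization $\sum_{i=1}^{q-1}\chi(i) = 0$ for $\chi \neq 1$ (using $\lambda\mid(q-1)$) would also work, but the elementary approach above is more direct.
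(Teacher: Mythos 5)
Your proof is correct and follows essentially the same route as the paper's: fix the first $t-1$ coordinates, observe that the congruence pins $i_t$ to a single residue class modulo $\lambda$, and use the even splitting of $\{1,\ldots,q-1\}$ into classes of size $(q-1)/\lambda$. The only difference is cosmetic — you justify the even-splitting observation explicitly and mention a character-sum alternative, while the paper simply asserts the count per residue class.
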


\begin{proof}
The case $\lambda=1$ is trivial. For $\lambda>1$, fix $1\leq i_1,\cdots, i_{t-1}\leq q-1$, and let $$i_1+i_2+\cdots+i_{t-1}\equiv s~({\rm mod}~\lambda)$$ with $0\leq s\leq \lambda-1$. Then $i_1+i_2+\cdots+i_t \equiv 0~({\rm mod}~\lambda)$ if and only if $i_t\equiv \lambda -s~({\rm mod}~\lambda)$. There are exactly $\frac{q-1}{\lambda}$ choices for $i_t$ in  each residue class, giving $(q-1)^{t-1} (\frac{q-1}{\lambda})$ solutions. This completes the proof.
\end{proof}

 \begin{theorem}\label{THM:13}
 Let $\lambda \mid (q-1)$ and $0\leq \ell \leq m-2$. Then
 $$|\cS_{\ell,\lambda}|= \frac{q^m-1}{\lambda}-\left(\frac{q-1}{\lambda}\right)\sum_{i=1}^{\lfloor  \frac{m}{\ell+2}    \rfloor} (-1)^{i-1} \frac{m(q-1)^{i-1}}{i}  \binom{m-i(\ell+1)-1}{i-1}  q^{m-i(\ell+2)}.$$	
 \end{theorem}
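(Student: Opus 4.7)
The plan is to establish the simple reduction $|\cS_{\ell,\lambda}|=\tfrac{1}{\lambda}|\cS_{\ell,1}|$ and then substitute Lemma~\ref{LEM:S_ell}. Checking that this reduction matches the target formula is a one-line algebraic verification: dividing the expression in Lemma~\ref{LEM:S_ell} by $\lambda$ and writing $(q-1)^i=(q-1)(q-1)^{i-1}$ produces exactly the right-hand side of Theorem~\ref{THM:13}.

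To prove the reduction I would partition $\cS_{\ell,1}$ according to the set $Z\subseteq\{0,1,\ldots,m-1\}$ of zero positions in the $q$-adic expansion $\overline{i}=(i_{m-1},\ldots,i_0)$. Both $\mathtt{Run}(\overline{i})$ and the nonvanishing of $i$ depend only on $Z$ (the latter requires $|Z|<m$), so setting
\[
T_Z=\{1\le i\le N:\ i_j=0\text{ for }j\in Z,\ i_j\in\{1,\ldots,q-1\}\text{ for }j\notin Z\}
\]
gives $\cS_{\ell,1}=\bigsqcup_{Z:\,\mathtt{Run}(Z)\le\ell,\,|Z|<m}T_Z$ and analogously $\cS_{\ell,\lambda}=\bigsqcup_Z(T_Z\cap\lambda\Z)$. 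The heart of the argument is then to show $|T_Z\cap\lambda\Z|=|T_Z|/\lambda=(q-1)^{m-|Z|}/\lambda$ for each such $Z$. Since $\lambda\mid q-1$ we have $q\equiv 1\pmod\lambda$, so the divisibility condition $\lambda\mid i$ is equivalent to $\sum_{j\notin Z}i_j\equiv 0\pmod\lambda$. Because $\lambda\mid q-1$, the digits $\{1,\ldots,q-1\}$ split into $\lambda$ residue classes of equal size $(q-1)/\lambda$; hence the same one-digit-at-a-time conditioning used to prove Lemma~\ref{LEM:11} yields the desired count, independent of which residue $s\in\{0,\ldots,\lambda-1\}$ one prescribes. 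Summing over $Z$ produces $|\cS_{\ell,\lambda}|=|\cS_{\ell,1}|/\lambda$.

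I do not expect any serious obstacle. The partition by $Z$ decouples the run constraint (which restricts $Z$ alone) from the divisibility constraint (which restricts only the nonzero digits), and the uniformity of $\{1,\ldots,q-1\}$ across residue classes modulo $\lambda$ makes the divisibility count trivially equidistributed. The remaining step, substituting Lemma~\ref{LEM:S_ell} and collecting the factor of $(q-1)/\lambda$, is purely routine.
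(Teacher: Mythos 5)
Your proposal is correct and takes essentially the same approach as the paper: the paper's proof likewise fixes the zero/non-zero pattern (your set $Z$), applies the conditioning argument of Lemma~\ref{LEM:11} to show that exactly a $1/\lambda$ fraction of the digit assignments in each pattern satisfy $\lambda\mid i$, concludes $|\cS_{\ell,\lambda}|=|\cS_{\ell,1}|/\lambda$, and then substitutes Lemma~\ref{LEM:S_ell}. Your write-up merely makes the partition and the edge case $|Z|<m$ explicit.
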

 
 \begin{proof}
 Since $\lambda |(q-1)$, we have $q\equiv 1~({\rm mod}~\lambda)$, implying $i\equiv i_{m-1}+i_{m-2}+\cdots+i_0~({\rm mod}~\lambda)$, for $i=i_{m-1}q^{m-1}+\cdots+i_1q+i_0$. Thus $\lambda \mid i$ if and only if $i_{m-1}+i_{m-2}+\cdots+i_0\equiv 0~({\rm mod}~\lambda)$. The constraint $\mathtt{Run}(\overline{i}) \leq \ell$ fixes zero/non-zero patterns. For each pattern, digits in non-zero runs are uniform over $\{1,\cdots,q-1\}$, and by Lemma~\ref{LEM:11}, exactly $\frac{1}{\lambda}$ of these sequences satisfy the sum condition. Hence $|\cS_{\ell,\lambda}| = |\cS_{\ell,1}| / \lambda$. Apply Lemma~\ref{LEM:S_ell} to conclude. This completes the proof. 
\end{proof}

\begin{theorem}\label{THM::14}
Let $\lambda \mid(q-1)$, $1\leq \ell_1\leq m-1$, and $\ell_0=0$. Then
$$\dim(\C_{(q, m, \lambda, \ell_0, \ell_1)})=\frac{q^m-1}{\lambda}-\left(\frac{q-1}{\lambda}\right)\sum_{i=1}^{\lfloor  \frac{m}{\ell_1+1}    \rfloor} (-1)^{i-1} \frac{m(q-1)^{i-1}}{i}  \binom{m-i \ell_1-1}{i-1}  q^{m-i(\ell_1+1)}+m.$$
\end{theorem}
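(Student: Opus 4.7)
The plan is to reduce the dimension computation to a direct application of Theorem~\ref{THM:13} by proving that under the hypothesis $\ell_0=0$ the boundary set $\cB$ is empty. By \eqref{eq:dim3},
\[
\dim(\C_{(q,m,\lambda,\ell_0,\ell_1)}) = |\cB| + |\cS_{\ell_1-1,\lambda}| + m,
\]
so once $\cB = \emptyset$ is established, it suffices to invoke Theorem~\ref{THM:13} at $\ell = \ell_1-1$ to obtain the claimed formula.

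To prove $\cB = \emptyset$, I would first substitute $\ell_0 = 0$ into \eqref{delta2} to obtain the explicit form
\[
\overline{\lambda\delta} = (\underbrace{0,\dots,0}_{\ell_1},\ \underbrace{q-1,\dots,q-1}_{m-\ell_1}).
\]
Now suppose, toward a contradiction, that $i \in \cB$, so $\mathtt{Run}(\overline{i}) = \ell_1$. Rotating $\overline{i}$ so that a maximal run of $\ell_1$ consecutive zeros occupies the leading positions yields a cyclic shift
\[
\overline{iq^{j_0} \bmod N} = (\underbrace{0,\dots,0}_{\ell_1},\ a,\ b_{m-2-\ell_1}, \dots, b_0)
\]
with $a \geq 1$, because the run is maximal of length exactly $\ell_1$. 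Comparing lexicographically against $\overline{\lambda\delta}$: if $a < q-1$ then the shift is strictly less than $\overline{\lambda\delta}$; if $a = q-1$ then, since each $b_j \leq q-1$, the shift is at most $\overline{\lambda\delta}$. In either case the strict inequality $\overline{iq^{j_0} \bmod N} > \overline{\lambda\delta}$ required by $\cB$ fails, contradicting $i \in \cB$.

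Hence $\cB = \emptyset$, and substituting the closed-form expression for $|\cS_{\ell_1-1,\lambda}|$ from Theorem~\ref{THM:13} at $\ell = \ell_1-1$ reproduces exactly the formula in the statement. The only delicate step is the emptiness of $\cB$, which hinges on the rigid shape of $\overline{\lambda\delta}$ when $\ell_0 = 0$: its nonzero tail consists uniformly of the maximal digit $q-1$, so no cyclic shift sharing the requisite $\ell_1$ leading zeros can strictly exceed it. Everything beyond this observation is a direct bookkeeping with the already-established counting lemmas.
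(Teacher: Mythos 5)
Your proposal is correct and takes essentially the same route as the paper: reduce via \eqref{eq:dim3} to showing $\cB=\emptyset$ when $\ell_0=0$, then substitute Theorem~\ref{THM:13} with $\ell=\ell_1-1$. The only difference is that you spell out the short rotation-and-comparison argument for $\cB=\emptyset$ (any shift placing the maximal run of $\ell_1$ zeros in the leading positions has integer value at most $q^{m-\ell_1}-1=\lambda\delta$), whereas the paper asserts this in one line from the shape of $\overline{\lambda\delta}$.
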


\begin{proof}
 When $\ell_0=0$, \eqref{delta2} gives
$$\overline{\lambda\delta} = ( \underbrace{0,\ldots, 0}_{\ell_1},\underbrace{q-1,\dots,q-1}_{m-\ell_1}).$$
It follows that $\cB=\emptyset$. By \eqref{eq:dim3}, we get $\dim(\C)=|\cS_{\ell_1-1, \lambda}|+m$.
Substitute $|\mathcal{S}_{\ell_1-1,\lambda}|$ from Theorem~\ref{THM:13} with $\ell = \ell_1 - 1$. This completes the proof.
\end{proof}

Using the computational algebra system MAGMA, we verified the assertions of Theorem \ref{THM::14}. The parameters of the BCH code $\mathcal{C}_{(q, m, \lambda, 0, \ell_1)}$ are tabulated in Table \ref{table2}.

\begin{table*}
{\caption{\rm Parameters of $\C_{(q, m, \lambda, 0, \ell_1)}$ for $1\leq \ell_1\leq m-1$
}\label{table2}
\begin{center}
\begin{tabular}{ccccccc}\hline
$q$  & $m$ & $\lambda$  &   $\ell_1$ & Parameters & Optimality \\ \hline
$2$ & $3$ &  $1$ &  $1$ & $[7,4,3]$ & Optimal\\ \hline
$2$ & $4$ &  $1$ &  $1$ & $[15,5,7]$ & Optimal\\ \hline
$2$ & $4$ &  $1$ &  $2$ & $[15,11,3]$ & Optimal\\ \hline
$2$ & $5$ &  $1$ &  $1$ & $[31,6,15]$ & Optimal\\ \hline
$2$ & $5$ &  $1$ &  $2$ & $[31,16,7]$ & $d_{\rm optimal}=8$\\ \hline
$2$ & $5$ &  $1$ &  $3$ & $[31,26,3]$ & Optimal\\ \hline
$2$ & $6$ &  $1$ &  $1$ & $[63,7,31]$ & Optimal\\ \hline
$2$ & $6$ &  $1$ &  $2$ & $[63,24,15]$ & $d_{\rm best}=16$\\ \hline
$2$ & $6$ &  $1$ &  $3$ & $[63,45,7]$ & $d_{\rm best}=8$\\ \hline
$2$ & $6$ &  $1$ &  $4$ & $[63,57,3]$ & Optimal\\ \hline
$3$ & $3$ &  $1$ &  $1$ & $[26,11,8]$ & $d_{\rm best}=9$\\ \hline
$3$ & $3$ &  $2$ &  $1$ & $[13,7,4]$ & $d_{\rm optimal}=5$\\ \hline
$3$ & $4$ &  $1$ &  $1$ & $[80,20,26]$ & $d_{\rm best}=33$\\ \hline
$3$ & $4$ &  $1$ &  $2$ & $[80,60,8]$ & Best Known\\ \hline
$3$ & $4$ &  $2$ &  $1$ & $[40,12,13]$ & $d_{\rm best}=18$\\ \hline
$3$ & $4$ &  $2$ &  $2$ & $[40,32,4]$ & $d_{\rm optimal}=5$\\ \hline
$3$ & $5$ &  $1$ &  $1$ & $[242,37,80]$ & $d_{\rm best}=98$\\ \hline
$3$ & $5$ &  $1$ &  $2$ & $[242,157,26]$ & Best Known\\ \hline
$3$ & $5$ &  $1$ &  $3$ & $[242,217,8]$ & Best Known\\ \hline
$3$ & $5$ &  $2$ &  $1$ & $[121,21,40]$ & $d_{\rm best}=55$\\ \hline
$3$ & $5$ &  $2$ &  $2$ & $[121,81,13]$ & $d_{\rm best}=15$\\ \hline
$3$ & $5$ &  $2$ &  $3$ & $[121,111,4]$ & $d_{\rm optimal}=5$\\ \hline
$4$ & $2$ &  $1$ &  $1$ & $[15,11,3]$ & $d_{\rm optimal}=4$\\ \hline
$4$ & $3$ &  $1$ &  $1$ & $[63,30,15]$ & $d_{\rm best}=18$\\ \hline
$4$ & $3$ &  $1$ &  $2$ & $[63,57,3]$ & $d_{\rm optimal}=4$\\ \hline
$4$ & $3$ &  $3$ &  $1$ & $[21,12,5]$ & $d_{\rm optimal}=7$\\ \hline
$4$ & $4$ &  $3$ &  $1$ & $[85,31,21]$ & $d_{\rm best}=29$\\ \hline
$4$ & $4$ &  $3$ &  $2$ & $[85,73,5]$ & $d_{\rm best}=6$\\ \hline
$5$ & $2$ &  $1$ &  $1$ & $[24,18,4]$ & $d_{\rm optimal}=5$\\ \hline
$5$ & $2$ &  $2$ &  $1$ & $[12,10,2]$ & Optimal\\ \hline
$5$ & $3$ &  $2$ &  $1$ & $[62,35,12]$ & $d_{\rm best}=15$\\ \hline
$5$ & $3$ &  $2$ &  $2$ & $[62,59,2]$ & Optimal\\ \hline
$5$ & $3$ &  $4$ &  $1$ & $[31,19,6]$ & $d_{\rm best}=8$\\ \hline
\end{tabular}
\end{center}}
\end{table*}




\begin{lemma}\label{LEM:15}
For $t, s \geq 1$, $\lambda \mid (q-1)$, and $0 < \ell_0 < \frac{q-1}{\lambda}$,
$$\left|\left\{(i_1,i_2,\ldots, i_t):~\sum_{i=1}^t i_i \equiv 0~({\rm mod}~\lambda), \begin{array}{c} 
q-\lambda \ell_0\leq  i_j \leq q-1,~\forall ~1 \leq j \leq s \\ 
1 \leq  i_j \leq q-1,~\forall ~s+1 \leq j \leq t 
\end{array}  \right \}\right|=\frac{(\lambda \ell_0)^s(q-1)^{t-s} }{\lambda}.$$	
\end{lemma}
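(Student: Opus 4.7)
The plan is to extend the fix-all-but-one-coordinate argument of Lemma~\ref{LEM:11}, exploiting the hypothesis $s\geq 1$ by taking as the free variable a coordinate from the restricted range $[q-\lambda\ell_0,q-1]$.

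First I would record the following residue-class count: since $\lambda\mid(q-1)$ and $\lambda\ell_0$ is a multiple of $\lambda$, the interval $[q-\lambda\ell_0,q-1]$ consists of $\lambda\ell_0$ consecutive integers and therefore contains exactly $\ell_0$ representatives of each residue class modulo $\lambda$. The analogous statement for $[1,q-1]$, namely $(q-1)/\lambda$ representatives per class, is already the content used in the proof of Lemma~\ref{LEM:11}.

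Next I would fix $i_2,i_3,\ldots,i_t$ arbitrarily within their allowed ranges, which gives $(\lambda\ell_0)^{s-1}(q-1)^{t-s}$ choices. Setting $r\equiv i_2+\cdots+i_t\pmod{\lambda}$, the congruence $\sum_{j=1}^{t}i_j\equiv 0\pmod{\lambda}$ forces $i_1\equiv-r\pmod{\lambda}$, and the residue-class count above supplies exactly $\ell_0$ admissible values of $i_1$ in $[q-\lambda\ell_0,q-1]$. Multiplying yields
\[
\ell_0\cdot(\lambda\ell_0)^{s-1}(q-1)^{t-s}=\frac{(\lambda\ell_0)^{s}(q-1)^{t-s}}{\lambda},
\]
which is the claimed identity.

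There is no serious obstacle here: the argument is a straightforward generalization of Lemma~\ref{LEM:11}, and the only step needing even minor care is the uniform distribution of residues modulo $\lambda$ on the interval $[q-\lambda\ell_0,q-1]$, which is immediate from $\lambda\mid\lambda\ell_0$. The hypothesis $s\geq 1$ is what permits us to designate the free variable from the restricted block; if $s=0$ the statement would reduce to Lemma~\ref{LEM:11} with $t-s$ in place of $t$.
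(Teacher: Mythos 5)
Your proof is correct and is exactly the argument the paper has in mind: the paper omits the proof of this lemma, stating only that it follows the same fix-all-but-one-coordinate argument as Lemma~\ref{LEM:11}, which is what you carry out. Your choice of the free variable from the restricted block $[q-\lambda\ell_0,\,q-1]$ (possible since $s\geq 1$) is a clean way to avoid splitting into the cases $t=s$ and $t>s$, and the key observation that this interval of $\lambda\ell_0$ consecutive integers contains exactly $\ell_0$ elements in each residue class modulo $\lambda$ is stated and justified correctly.
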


\begin{proof}
The proof follows a similar argument to Lemma \ref{LEM:11} and is omitted here for concision. 	
\end{proof}

\begin{lemma}\cite{Srensen}\label{LEM:16}
For integers $t, s, l\geq 0$,
$$\left|\{(x_1,x_2,\ldots,x_t):~0\leq x_i\leq l,~x_1+x_2+\cdots+x_t=s\}\right|=\sum_{j=0}^t (-1)^j\binom{t}{j}\binom{s-j(l+1)+t-1}{s-j(l+1)}.$$
\end{lemma}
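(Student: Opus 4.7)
The plan is to prove this standard counting identity either by inclusion-exclusion or, more cleanly, by a generating function argument. I would opt for the generating function approach because the bookkeeping is trivial and the answer falls out by reading off a coefficient.

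First, I would observe that the quantity on the left is, by definition, the coefficient of $x^s$ in
$$
\bigl(1+x+x^2+\cdots+x^l\bigr)^t
= \left(\frac{1-x^{l+1}}{1-x}\right)^t
= (1-x^{l+1})^t\,(1-x)^{-t},
$$
since one factor $(1+x+\cdots+x^l)$ tracks the admissible values of each $x_i$ and the product collects the contributions to the sum. Next, I would expand each factor. The binomial theorem gives
$$
(1-x^{l+1})^t=\sum_{j=0}^{t}(-1)^j\binom{t}{j}x^{j(l+1)},
$$
and the negative binomial series gives
$$
(1-x)^{-t}=\sum_{k\geq 0}\binom{k+t-1}{k}x^{k}.
$$

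Then I would multiply the two expansions and extract the coefficient of $x^s$, which forces $k=s-j(l+1)$. This yields exactly
$$
\sum_{j=0}^{t}(-1)^j\binom{t}{j}\binom{s-j(l+1)+t-1}{s-j(l+1)},
$$
with the convention that $\binom{a}{b}=0$ when $b<0$; those terms correspond to $j>s/(l+1)$ and contribute nothing, matching the combinatorial fact that one cannot remove more than $\lfloor s/(l+1)\rfloor$ blocks of size $l+1$ from the sum.

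There is no serious obstacle here: the identity is classical (an inclusion-exclusion on the events ``$x_i \geq l+1$'' would give the same formula, with $|A_{i_1}\cap\cdots\cap A_{i_j}|=\binom{s-j(l+1)+t-1}{t-1}$ after the standard substitution $x_{i_k}\mapsto x_{i_k}+(l+1)$). The only point requiring a brief remark is the vanishing convention for binomial coefficients with a negative argument, so that the upper limit of summation can be written as $t$ rather than $\min(t,\lfloor s/(l+1)\rfloor)$.
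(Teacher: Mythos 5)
Your proof is correct. Note that the paper itself gives no proof of this lemma at all: it is quoted verbatim from S{\o}rensen's paper on projective Reed--Muller codes, so there is no in-paper argument to compare against. Your generating-function derivation is the standard one and is complete: the count is $[x^s]\bigl(1+x+\cdots+x^l\bigr)^t$, the factorization $(1-x^{l+1})^t(1-x)^{-t}$ is a valid identity of formal power series, both expansions are correct, and extracting the coefficient of $x^s$ forces $k=s-j(l+1)$ and yields exactly the stated sum. You are also right to flag the one point of bookkeeping, namely the convention $\binom{a}{b}=0$ for $b<0$, which is what lets the upper summation limit be written as $t$; the paper's statement implicitly relies on the same convention. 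Your parenthetical inclusion-exclusion alternative (on the events $x_i\geq l+1$, via the shift $x_i\mapsto x_i+(l+1)$) is likewise sound and gives the same formula. Either route would serve as a self-contained proof of the cited identity.
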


\begin{theorem}\label{THM::17}
Let $1\leq \ell_1\leq m-1$ and $0<\ell_0<\frac{q-1}{\lambda}$ with $\lambda \mid(q-1)$. Then
$$|\cB|=\sum_{t=1}^m\sum_{s=1}^t \Xi_{s,t} \frac{(\lambda \ell_0)^s (q-1)^{t-s}}{\lambda}$$
where 
\begin{align*}
\Xi_{s,t} &=(\ell_1+1) \binom{t-1}{s-1} \sum_{j=0}^{t-s} (-1)^j\binom{t-s}{j}\binom{m-s(\ell_1+1)-1-j \ell_1}{m-t-(s+j)\ell_1} \\
&~~~+ \binom{t-1}{s} \sum_{j=0}^{t-s-1}(-1)^j \binom{t-s-1}{j} \sum_{s'=0}^{\ell_1-1} (s'+1) \binom{m-s(\ell_1+1)-2-s'-j\ell_1 }{m-t-(s+j)\ell_1-s'}.
\end{align*}	
\end{theorem}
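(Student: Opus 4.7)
The plan is to enumerate $\cB$ by a careful structural analysis of the cyclic digit sequence $\overline{i}$. First I would combine Lemmas \ref{lem::8} and \ref{lem::9} with a direct analysis of the shift $\overline{iq^j \bmod N}$ that begins with exactly $\ell_1$ leading zeros to reformulate the defining conditions of $\cB$ as: an integer $1 \leq i \leq N$ lies in $\cB$ if and only if (i) $\lambda \mid i$; (ii) $\mathtt{Run}(\overline{i}) = \ell_1$; and (iii) every maximal cyclic zero-run of length exactly $\ell_1$ is followed, in the direction of decreasing cyclic position, by a non-zero digit in $\{q-\lambda\ell_0,\, q-\lambda\ell_0+1,\, \dots,\, q-1\}$. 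The strict inequality $> \overline{\lambda\delta}$ excludes the value $q-1-\lambda\ell_0$ because the remaining suffix of $\overline{\lambda\delta}$ is $(q-1, \dots, q-1)$, which is already maximal.

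Next I would parametrize by two integers: $t$, the number of non-zero entries of $\overline{i}$, and $s$, the number of maximal cyclic zero-runs of length exactly $\ell_1$. This partitions the $t$ non-zero digits into $s$ \emph{special} ones (each constrained to $\{q-\lambda\ell_0, \dots, q-1\}$) and $t-s$ \emph{regular} ones (each in $\{1, \dots, q-1\}$). Since $\lambda \mid (q-1)$ gives $q \equiv 1 \pmod{\lambda}$, the divisibility $\lambda \mid i$ reduces to the digit-sum congruence $\sum_j i_j \equiv 0 \pmod{\lambda}$, and Lemma \ref{LEM:15} counts the valid value assignments as $\frac{(\lambda\ell_0)^s (q-1)^{t-s}}{\lambda}$, independently of the positions of special versus regular digits. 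It then suffices to count, for each $(t, s)$, the number of valid zero/non-zero patterns; this is exactly $\Xi_{s, t}$.

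To obtain $\Xi_{s,t}$ I would split on the length $s' = l + r$ of the wrap-around structure, where $l$ is the number of leading zeros of $\overline{i}$ and $r$ is the number of trailing zeros. Case A is $s' = \ell_1$: the wrap-around zero-run is itself one of the $s$ special runs, $(l, r)$ is chosen in $\ell_1 + 1$ ways, the interior block between the highest and lowest non-zero positions has length $m - \ell_1$, begins and ends with non-zero entries, and carries all $t$ non-zero digits together with $t - 1$ internal gaps in $\{0, \dots, \ell_1\}$ summing to $m - \ell_1 - t$, exactly $s - 1$ of which equal $\ell_1$. Selecting the $s - 1$ internal special positions gives $\binom{t-1}{s-1}$, and enumerating the remaining $t - s$ regular gaps in $\{0, \dots, \ell_1 - 1\}$ with sum $m - s\ell_1 - t$ via Lemma \ref{LEM:16} reproduces the first summand of $\Xi_{s,t}$. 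Case B is $s' < \ell_1$: all $s$ special runs lie in the interior; for each $s' \in \{0, \dots, \ell_1 - 1\}$ there are $s' + 1$ splits $(l, r)$, the $s$ special internal gaps are placed in $\binom{t-1}{s}$ ways among the $t - 1$ internal gaps, and the remaining $t - s - 1$ regular gaps in $\{0, \dots, \ell_1 - 1\}$ must sum to $m - s' - s\ell_1 - t$; Lemma \ref{LEM:16} together with the sum over $s'$ reproduces the second summand.

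The main obstacle I anticipate is the cyclic-to-linear bookkeeping: correctly identifying which non-zero digit plays the role of the special digit for each zero-run of length $\ell_1$, ensuring the Case A versus Case B split is both exhaustive and disjoint, and matching the prefactors $(\ell_1 + 1)$ and $(s' + 1)$ to the correct interior arrangement count. A subtler issue is that Lemma \ref{LEM:15} is stated with the $s$ special indices in the first $s$ slots, whereas here the special and regular digits are interleaved according to the gap pattern; however, the count in that lemma depends only on the multiset of value restrictions together with the single congruence $\sum_j i_j \equiv 0 \pmod{\lambda}$, so it applies regardless of which specific indices carry the stronger restriction.
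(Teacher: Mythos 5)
Your proposal is correct and follows essentially the same route as the paper: partition $\cB$ by the number $t$ of non-zero digits and the number $s$ of maximal cyclic zero-runs of length exactly $\ell_1$, observe via Lemma \ref{LEM:15} that the digit-value count $\frac{(\lambda\ell_0)^s(q-1)^{t-s}}{\lambda}$ is independent of the pattern, and count patterns by splitting on whether the wrap-around run (your $s'=l+r$, the paper's $v_1=x_1+x_{t+1}$) has length $\ell_1$ or less, applying Lemma \ref{LEM:16} to the remaining gaps. Your two cases, including the prefactors $\ell_1+1$ and $s'+1$ and the binomials $\binom{t-1}{s-1}$ and $\binom{t-1}{s}$, match the paper's Case 1 and Case 2 exactly.
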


\begin{proof}
 Recall that the weight of $\overline{i}=(i_{m-1}, i_{m-2},\ldots,i_0)$ is defined by
 $$\wt(\overline{i})=\left|\left\{0\leq j\leq m-1:~i_j\neq 0\right\}\right|.$$
 Partition the set $\cB$ by weight: $\cB = \cup_{t=1}^m \cB_t$, where $\cB_t = \{ i \in \cB : \wt(\overline{i}) = t \}$. For $i \in \cB_t$, represent $\overline{i}$ in the canonical form
$$\overline{i}=(\underbrace{0,\ldots, 0}_{x_1},a_1,\underbrace{0,\ldots, 0}_{x_2},a_2, \ldots, \underbrace{0,\ldots, 0}_{x_t},a_t, \underbrace{0,\ldots, 0}_{x_{t+1}}),$$
where each $a_j$ is an integer satisfying $1\leq a_j \leq q-1$, each $x_j$ is a non-negative integer, and $x_1+x_2+\cdots+x_{t+1}=m-t$. 
Note that
$$\overline{\lambda\delta} = (\underbrace{0,\dots,0}_{\ell_1}, q-1-\lambda\ell_0,\underbrace{q-1,\dots,q-1}_{m-1-\ell_1}).
$$
Define $$v_j=\begin{cases}
x_1+x_{t+1}~&{\rm if}~j=1,\\
x_j~&{\rm if}~2\leq j\leq t.	
\end{cases}
$$
Then $i\in \cB_t$ if and only if the following conditions hold:
\begin{enumerate}
	\item $\max\{ v_1, v_2, \cdots,v_{t} \}=\ell_1$.
	\item $v_1+\cdots+v_t=m-t$.
	\item $a_1+a_2+\cdots+a_t \equiv 0~({\rm mod}~\lambda)$.
	\item $q-\lambda \ell_0 \leq a_j \leq q-1$ if $v_j=\ell_1$.
\end{enumerate}
Define $K=\{ 1\leq j\leq t:~v_j \}=\ell_1$ ad let $s=|K|$.  We distinguish cases by the membership of index $1$ in $K$.

{\it Case 1}: $1\in K$, i.e., $v_1=x_1+x_{t+1}=\ell_1$. The constraints $0\leq x_1\leq \ell_1$ and $x_{t+1}=\ell_1-x_1$ yield $\ell_1+1$ solutions for $(x_1,x_{t+1})$. Since $|K|=s$ and $1\in K$, select $s-1$ indices from $\{2,3, \cdots, t\}$ for $K$ in $\binom{t-1}{s-1}$ ways. For $j\notin K$, the sum constraint is
$$\sum_{1\leq i\leq t, i\notin K} v_i=m-t-s\ell_1$$
with $0\leq v_j\leq \ell_1-1$.  The coefficients $a_j$ satisfy 
$$a_1+a_2+\cdots+a_t \equiv 0~({\rm mod}~\lambda),$$ 
where for $j\in K$, $q-\lambda \ell_0\leq  a_j\leq q-1$, and for $j\notin K$, $1\leq  a_j\leq q-1$. By Lemma \ref{LEM:15} and Lemma \ref{LEM:16}, the number of such vectors is 
$$(\ell_1+1) \binom{t-1}{s-1} \sum_{j=0}^{t-s} (-1)^j\binom{t-s}{j}\binom{m-s(\ell_1+1)-1-j \ell_1}{m-t-(s+j)\ell_1} \frac{(\lambda \ell_0)^s  (q-1)^{t-s}}{\lambda } .$$

{\it Case 2}: $1\notin K$, i.e., $0\leq v_1= x_1+x_{t+1}\leq \ell_1-1$. Select all $s$ elements of $K$ from $\{2,\ldots,t\}$ in $\binom{t-1}{s}$ ways. For fixed $v_1$, the equations $0 \leq x_1 \leq v_1$ and $x_{t+1} = v_1 - x_1$ yield $v_1 + 1$ solutions. The sum constraint for $j \geq 2$ not in $K$ is
$$\sum_{2\leq i\leq t, i\notin K} v_i=m-t-s\ell_1-v_1$$
with $0\leq v_j\leq \ell_1-1$. The conditions on $a_j$ are identical to Case 1. By Lemma \ref{LEM:15} and Lemma \ref{LEM:16}, the number of such vectors is
$$ \binom{t-1}{s} \Delta \frac{(\lambda \ell_0)^s  (q-1)^{t-s}}{\lambda },$$
where 
\begin{align*}
\Delta= \sum_{v_1=0}^{\ell_1-1} (v_1+1)\sum_{j=0}^{t-s-1}(-1)^j \binom{t-1-s}{j} \binom{m-s(\ell_1+1)-2-v_1-j\ell_1 }{m-t-(s+j)\ell_1-v_1}.
\end{align*}
Summing both cases over $s \in \{1,2,\cdots,t\}$ gives $|\mathcal{B}_t|$, and $|\mathcal{B}| = \sum_{t=1}^m |\mathcal{B}_t|$.  This completes the proof.
\end{proof}

Combining Theorem \ref{THM:13}, Theorem \ref{THM::17} and \eqref{eq:dim3}, we obtain

\begin{theorem}\label{THM:18}
	Let $1\leq \ell_1\leq m-1$ and $0<\ell_0<\frac{q-1}{\lambda}$ with $\lambda \mid(q-1)$. Then
	\begin{align*}
	&\dim(\C_{(q, m, \lambda, \ell_0, \ell_1)})\\
	=~&\frac{q^m-1}{\lambda}-\left(\frac{q-1}{\lambda}\right)\sum_{i=1}^{\lfloor  \frac{m}{\ell_1+1}    \rfloor} (-1)^{i-1} \frac{m(q-1)^{i-1}}{i}  \binom{m-i \ell_1-1}{i-1}  q^{m-i(\ell_1+1)}+m+|\cB|,	
	\end{align*}
	with $|\cB|$ as in Theorem \ref{THM::17}.
\end{theorem}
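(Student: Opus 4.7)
The plan is essentially bookkeeping: Theorem~\ref{THM:18} is a direct consequence of the decomposition (\ref{eq:dim3}) combined with the closed forms for $|\cS_{\ell_1-1,\lambda}|$ and $|\cB|$ already obtained. So my proof is mostly a matter of substituting indices correctly and checking that the ranges on $\ell_0$ and $\ell_1$ are compatible with the hypotheses of the ingredient results.

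First, I would invoke Lemma~\ref{lem:orbit}, which applies since $1\leq \ell_1\leq m-1$ is in force, to conclude $|C_{\lambda\delta}^{(q,N)}|=m$ and hence (\ref{eq:dim2}) holds. Then Lemmas~\ref{lem::8} and~\ref{lem::9} give the partition
\[
\left\{1\leq i\leq N:\overline{iq^j\bmod N}>\overline{\lambda\delta}\;\forall j,\;\lambda\mid i\right\}=\cB\sqcup\cS_{\ell_1-1,\lambda},
\]
which is exactly (\ref{eq:dim3}):
\[
\dim(\C_{(q,m,\lambda,\ell_0,\ell_1)})=|\cB|+|\cS_{\ell_1-1,\lambda}|+m.
\]

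Next I would instantiate Theorem~\ref{THM:13} with $\ell=\ell_1-1$, noting that the condition $0\leq \ell\leq m-2$ becomes $1\leq \ell_1\leq m-1$, which matches our hypothesis. Substituting $\ell+1=\ell_1$ and $\ell+2=\ell_1+1$ gives
\[
|\cS_{\ell_1-1,\lambda}|=\frac{q^m-1}{\lambda}-\left(\frac{q-1}{\lambda}\right)\sum_{i=1}^{\lfloor m/(\ell_1+1)\rfloor}(-1)^{i-1}\frac{m(q-1)^{i-1}}{i}\binom{m-i\ell_1-1}{i-1}q^{m-i(\ell_1+1)}.
\]
For $|\cB|$ I would simply cite the expression from Theorem~\ref{THM::17}, which was derived precisely under the standing hypothesis $0<\ell_0<(q-1)/\lambda$ and $1\leq \ell_1\leq m-1$.

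Adding these two formulas together with the $+m$ contribution from $|C_{\lambda\delta}^{(q,N)}|$ yields the claimed expression. Since every step is a direct citation there is no genuine obstacle here; the only point demanding care is making sure the index shift $\ell=\ell_1-1$ in Theorem~\ref{THM:13} is applied correctly so that the binomial and exponent in the summand match the form stated in Theorem~\ref{THM:18}, and confirming that the hypothesis ranges all align so that none of the invoked lemmas is used outside its domain of validity.
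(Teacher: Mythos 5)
Your proposal is correct and follows exactly the paper's route: the paper proves Theorem~\ref{THM:18} precisely by combining \eqref{eq:dim3} with Theorem~\ref{THM:13} (instantiated at $\ell=\ell_1-1$) and Theorem~\ref{THM::17}. Your additional checks that the union $\cB\cup\cS_{\ell_1-1,\lambda}$ is disjoint and that the hypothesis ranges align are sound and merely make explicit what the paper leaves implicit.
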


Using the computational algebra system MAGMA, we verified the assertions of Theorem \ref{THM::17}. The parameters of the BCH code $\mathcal{C}_{(q, m, \lambda, \ell_0, \ell_1)}$ are tabulated in Table \ref{table3}.

\begin{table*}
{\caption{\rm Parameters of $\C_{(q, m, \lambda, \ell_0, \ell_1)}$ for $1\leq \ell_1\leq m-1$ and $1\leq  \ell_0<\frac{q-1}{\lambda}$
}\label{table3}
\begin{center}
\begin{tabular}{ccccccc}\hline
$q$  & $m$ & $\lambda$  & $\ell_0$&  $\ell_1$ & Parameters & Optimality \\ \hline
$3$ & $3$ &  $1$ & $1$ & $1$ & $[26,17,5]$ & $d_{\rm optimal}=6$\\ \hline
$3$ & $4$ &  $1$ & $1$ & $1$ & $[80,38,17]$ &$d_{\rm best}=21$\\ \hline
$3$ & $4$ &  $1$ & $2$ & $1$ & $[80,60,8]$ & Best Known\\ \hline
$3$ & $4$ &  $1$ & $1$ & $2$ & $[80,68,5]$ & $d_{\rm optimal}=6$\\ \hline
$3$ & $4$ &  $1$ & $2$ & $2$ & $[80,76,2]$ & Optimal\\ \hline
$3$ & $5$ &  $1$ & $1$ & $1$ & $[242,87,53]$ & $d_{\rm best}=59$\\ \hline
$3$ & $5$ &  $1$ & $2$ & $1$ & $[242,157,26]$ & Best Known\\ \hline
$3$ & $5$ &  $1$ & $1$ & $2$ & $[242,187,17]$ & Best Known\\ \hline
$3$ & $5$ &  $1$ & $2$ & $2$ & $[242,217,8]$ & Best Known\\ \hline
$3$ & $5$ &  $1$ & $1$ & $3$ & $[242,227,5]$ & Best Known\\ \hline
$3$ & $5$ &  $1$ & $2$ & $3$ & $[242,237,2]$ & Optimal\\ \hline
$5$ & $2$ &  $1$ & $1$ & $1$ & $[24,20,3]$ & $d_{\rm optimal}=4$\\ \hline
$5$ & $2$ &  $1$ & $2$ & $1$ & $[24,22,2]$ & Optimal\\ \hline
$5$ & $3$ &  $1$ & $1$ & $1$ & $[124,79,19]$ & Best Known\\ \hline
$5$ & $3$ &  $1$ & $2$ & $1$ & $[124,91,14]$ & Best Known\\ \hline
$5$ & $3$ &  $1$ & $3$ & $1$ & $[124,103,9]$ & $d_{\rm best}=10$\\ \hline
$5$ & $3$ &  $1$ & $1$ & $2$ & $[124,118,3]$ & $d_{\rm optimal}=4$\\ \hline
$5$ & $3$ &  $1$ & $2$ & $2$ & $[124,121,2]$ & Optimal\\ \hline
$5$ & $3$ &  $2$ & $1$ & $1$ & $[62,47,7]$ & $d_{\rm best}=8$\\ \hline
\end{tabular}
\end{center}}
\end{table*}

Although the general dimension formulas in Theorems \ref{THM::14} and \ref{THM:18} exhibit complexity, they simplify significantly under specific parameter constraints. When $1 \leq \ell_1 \leq \lfloor (m-2)/2 \rfloor$ and $\lambda = q-1$, Theorem \ref{THM::14} resolves Open Problem \ref{Open3} posed by Li et al. \cite{BCH5}. For the complementary range $\lceil (m-1)/2 \rceil \leq \ell_1 \leq m-1$, we derive a simplified closed-form expression. The derivation relies on the following auxiliary results.

\begin{lemma}\cite{BCH6}\label{lem:6}
Let $m\geq 2$ and let $1\leq i\leq q^{\lfloor (m+1)/2\rfloor}-1$ with $i\not\equiv 0\pmod q $. Then  $i$ is a $q$-cyclotomic coset leader modulo $N$ (i.e., $i \in \Gamma_{(q, N)}$) and $|C_i^{(q, N)}| = m$.
\end{lemma}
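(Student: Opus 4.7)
My plan is to translate both claims into the language of $q$-adic digits and reduce everything to a single structural observation about cyclic shifts. I would first write $\overline{i} = (i_{m-1}, \ldots, i_0)$ for the base-$q$ expansion of $i$. The hypothesis $i \le q^{\lfloor (m+1)/2 \rfloor} - 1$ forces $i_j = 0$ for all $j \ge \lfloor (m+1)/2 \rfloor$, while $i \not\equiv 0 \pmod q$ gives $i_0 \ne 0$. Setting $t$ to be the largest index with $i_t \ne 0$, one has $0 \le t \le \lfloor(m+1)/2\rfloor - 1$, which a quick parity check (even/odd $m$ separately) shows to be equivalent to $t < m/2$. This single numerical fact will drive the rest of the argument.

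The key structural claim I would aim for is that $\overline{i}$ has a \emph{unique} cyclic run of zeros of maximal length, namely the top block of positions $\{t+1, t+2, \ldots, m-1\}$ of length $m - 1 - t$. Indeed, since $i_0$ and $i_t$ are both nonzero, any other run of zeros lies strictly inside $\{1, 2, \ldots, t-1\}$ and so has length at most $t - 1$; the inequality $t < m/2$ gives $t - 1 < m - 1 - t$, securing uniqueness.

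Applying the cyclic-shift formula recorded in Section~3,
$$\overline{iq^j \bmod N} = (i_{m-1-j}, \ldots, i_0, i_{m-1}, \ldots, i_{m-j}),$$
the number of leading zeros of $\overline{iq^j \bmod N}$ equals the length of the maximal zero run of $\overline{i}$ that terminates (in the left-to-right reading direction of the shifted vector) at position $m - 1 - j$. Because the unique longest zero run of $\overline{i}$ terminates only at position $m - 1$, every shift with $1 \le j \le m - 1$ breaks it and produces strictly fewer than $m - 1 - t$ leading zeros. Consequently $\overline{iq^j \bmod N}$ carries a nonzero digit at some position strictly above $t$, while $\overline{i}$ vanishes above position $t$, so the first differing coordinate forces $iq^j \bmod N > i$ strictly.

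Both conclusions then fall out at once: the strict inequality for every $1 \le j \le m - 1$ makes $i$ the unique minimum of its coset, so $i \in \Gamma_{(q, N)}$, and simultaneously shows that $iq^j \not\equiv i \pmod N$ throughout that range, so $|C_i^{(q, N)}| = m$. I expect the only delicate point to be the uniqueness of the top zero run; the bound on $i$ is calibrated exactly so that $t < m/2$, and without that margin an interior zero run could tie with the top block and ruin the cyclic-shift comparison.
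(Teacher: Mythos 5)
Your argument is correct. Note first that the paper does not prove this lemma at all: it is imported verbatim from \cite{BCH6}, so there is no in-paper proof to compare against; what you have written is a self-contained justification of a cited result. Your reduction is the standard one and it is sound: the hypotheses pin the top nonzero digit at an index $t<m/2$, which makes the block of zeros in positions $t+1,\dots,m-1$ the \emph{strictly} unique longest cyclic zero run (every other run sits inside $\{1,\dots,t-1\}$ because $i_0\neq 0$ and $i_t\neq 0$), and then each nontrivial shift has strictly fewer leading zeros, hence a nonzero digit above position $t$, hence exceeds $i$, giving both the coset-leader property and $|C_i^{(q,N)}|=m$ in one stroke. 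The only wording I would tighten is the sentence identifying the number of leading zeros of $\overline{iq^j \bmod N}$ with ``the length of the maximal zero run that terminates at position $m-1-j$'': it is really the length of the initial zero segment starting at position $m-1-j$ and descending to the lower end of whichever run contains that position, i.e., a \emph{lower portion} of a run rather than a whole run. That segment has length equal to the full run length only when $m-1-j$ is the run's upper end, and equals $m-1-t$ only when that run is the top block and $j=0$; stated this way the strict inequality for $1\le j\le m-1$ follows cleanly and your conclusion stands.
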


\begin{lemma}\cite[Lemma 1]{SLZT}\label{lem:7}
    Let $n=(q^m-1)/\lambda$ and let $\Gamma_{(q, n)}$ be the set of $q$-cyclotomic coset leaders modulo $n$. Then for any integer $i$, $i \in \Gamma_{(q, n)} \iff \lambda i \in \Gamma_{(q, N)}$  and $|C_i^{(q, n)}| = |C_{\lambda i}^{(q, N)}|$.
\end{lemma}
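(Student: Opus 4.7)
The plan is to show that multiplication by $\lambda$ implements an order-preserving bijection between the $q$-cyclotomic cosets modulo $n$ and the $q$-cyclotomic cosets modulo $N$ that lie in $\lambda\Z_N$; both assertions of the lemma will then drop out simultaneously.

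The engine of the proof will be the one-line identity
$$\lambda \cdot (a \bmod n) \;=\; (\lambda a) \bmod N,$$
valid for every $a \ge 0$, which holds simply because $N = \lambda n$ (write $a = tn + r$ with $0 \le r < n$; then $\lambda a = tN + \lambda r$ with $0 \le \lambda r < N$). Applying this identity with $a = iq^j$ for $0 \le j \le m-1$ will yield the elementwise set equality
$$\lambda \cdot C_i^{(q, n)} \;=\; C_{\lambda i}^{(q, N)}.$$
To justify the ``$\supseteq$'' direction of this equality I also need to know that every element of $C_{\lambda i}^{(q, N)}$ is divisible by $\lambda$; this will follow from $q \equiv 1 \pmod \lambda$ (a consequence of $\lambda \mid q-1$) together with $\lambda \mid N$, so each $\lambda i q^j \bmod N$ remains a multiple of $\lambda$.

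From this single set equality the two conclusions of the lemma are immediate. First, injectivity of $x \mapsto \lambda x$ on $\{0, 1, \ldots, n-1\}$ forces $|C_i^{(q, n)}| = |C_{\lambda i}^{(q, N)}|$. Second, because $\lambda > 0$ preserves the natural order on non-negative integers, the minimum of $\lambda \cdot C_i^{(q, n)}$ equals $\lambda$ times the minimum of $C_i^{(q, n)}$; hence $\lambda i$ is the coset leader of $C_{\lambda i}^{(q, N)}$ if and only if $i$ is the coset leader of $C_i^{(q, n)}$, which is exactly the claimed equivalence $i \in \Gamma_{(q, n)} \Longleftrightarrow \lambda i \in \Gamma_{(q, N)}$.

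I do not anticipate any real obstacle: the proof is essentially bookkeeping around the commutation identity displayed above. The only subtlety worth spelling out is the verification that $C_{\lambda i}^{(q, N)} \subseteq \lambda\Z_N$, since without this inclusion the map $\lambda \cdot (-)$ would not surject onto $C_{\lambda i}^{(q, N)}$; this in turn rests on the two divisibilities $\lambda \mid N$ and $\lambda \mid (q-1)$, both of which are standing hypotheses.
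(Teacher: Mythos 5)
Your proof is correct. Note that the paper itself gives no argument for this lemma --- it is quoted verbatim from \cite[Lemma 1]{SLZT} --- so there is no internal proof to compare against; your write-up supplies a complete, self-contained justification. The commutation identity $\lambda\cdot(a \bmod n)=(\lambda a)\bmod N$ is exactly the right engine: it gives the elementwise equality $\lambda\cdot C_i^{(q,n)}=C_{\lambda i}^{(q,N)}$ directly in both directions, from which the size equality (injectivity of $x\mapsto\lambda x$) and the leader equivalence (order preservation) follow at once. One small remark: the divisibility check you flag as the ``only subtlety'' is already built into the identity itself --- each element $(\lambda i q^{j})\bmod N$ equals $\lambda\cdot(iq^{j}\bmod n)$ and is therefore automatically a multiple of $\lambda$, using only $\lambda\mid N$. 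The hypothesis $\lambda\mid(q-1)$ plays no role anywhere in the argument, which is consistent with the cited source, where the lemma is applied with $\lambda=q+1$; so your proof in fact establishes the statement for any $\lambda$ dividing $q^{m}-1$, slightly more generally than the standing assumptions of this paper require.
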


\begin{theorem}\label{THM:2}
Let $m\geq 2$ and $\lceil (m-1)/2\rceil\leq   \ell_1\leq m-1$. Then 
$\dim( \C_{(q, m, \lambda, \ell_0, \ell_1)} )=\frac{q^m-1}{\lambda}-m\cdot \varepsilon$, where
$$\varepsilon=\begin{cases}
\frac{q-1}{\lambda}-1-\ell_0 ~&{\rm if}~\ell_1=m-1,\\
(\frac{q-1}{\lambda}) (q - \lambda \ell_0) q^{m-2-\ell_1} - 1~&{\rm if}~\lceil (m-1)/2\rceil\leq   \ell_1\leq m-2.
\end{cases}
 $$
\end{theorem}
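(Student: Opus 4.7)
The plan is to bypass the combinatorial apparatus of $\cB$ and $\cS_{\ell,\lambda}$ entirely and compute the dimension directly from the generator polynomial via $\dim(\C_{(q,m,\lambda,\ell_0,\ell_1)}) = (q^m-1)/\lambda - \deg(g)$. This is viable because when $\ell_1 \geq \lceil (m-1)/2\rceil$ the designed distance is small enough that every $q$-cyclotomic coset modulo $n$ intersecting $[1,\delta-1]$ has full size $m$ and admits a clean leader description through Lemmas \ref{lem:6} and \ref{lem:7}.

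The first step is to verify $\lambda(\delta-1) = (q-\lambda\ell_0)q^{m-1-\ell_1} - 1 - \lambda < q^{m-\ell_1}$ and to observe that $\ell_1 \geq \lceil (m-1)/2\rceil$ is equivalent to $m-\ell_1 \leq \lfloor (m+1)/2 \rfloor$, hence $\lambda(\delta-1) < q^{\lfloor (m+1)/2 \rfloor}$. Lemma \ref{lem:6} then guarantees that every $k \in [1,\lambda(\delta-1)]$ with $q\nmid k$ lies in $\Gamma_{(q,N)}$ and has $|C_k^{(q,N)}|=m$. Since $\gcd(\lambda,q)=1$, Lemma \ref{lem:7} translates this to: $j \in [1,\delta-1]$ is a coset leader in $\Gamma_{(q,n)}$ if and only if $q\nmid j$, and in that case $|C_j^{(q,n)}|=m$. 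Indices $j \in [1,\delta-1]$ with $q\mid j$ contribute no new cosets, since the leader of $C_j^{(q,n)}$ is $j/q^v$ (with $q^v\|j$), which still lies in $[1,\delta-1]$. Consequently,
\[
\deg(g) = m \cdot \bigl|\{j \in [1,\delta-1]: q \nmid j\}\bigr| = m\bigl[(\delta-1) - \lfloor (\delta-1)/q\rfloor\bigr].
\]

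It remains to evaluate this count in closed form. For $\ell_1 = m-1$ one has $\lambda(\delta-1) < q$, so $\delta - 1 < q$ and $\lfloor(\delta-1)/q\rfloor=0$, giving $\varepsilon = \delta - 1 = (q-1)/\lambda - 1 - \ell_0$. For $\lceil (m-1)/2\rceil \leq \ell_1 \leq m-2$, set $M = (q-\lambda\ell_0)q^{m-2-\ell_1}$ so that $\lambda\delta = qM - 1$; since $M \equiv 1 \pmod \lambda$, write $\delta = qA + (q-1)/\lambda$ with $A = (M-1)/\lambda$. Because $0 \leq (q-1)/\lambda - 1 < q$, one reads off $\lfloor (\delta-1)/q\rfloor = A$, and rearrangement yields $(\delta-1) - \lfloor(\delta-1)/q\rfloor = (q-1)M/\lambda - 1$, matching the theorem. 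The main obstacle is this last arithmetic manipulation: the $-1$ in the formula emerges from a single off-by-one tied to $q \nmid \delta$ (equivalent to $\lambda\delta \equiv -1 \pmod q$), and one must carefully separate the residue $(q-1)/\lambda - 1$ of $\delta - 1$ modulo $q$ from the integer quotient $A$ to extract that constant.
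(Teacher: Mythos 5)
Your proposal is correct and follows essentially the same route as the paper's proof: reduce to counting full-size cyclotomic cosets via Lemmas \ref{lem:6} and \ref{lem:7}, obtain $\dim(\C) = n - m\bigl(\delta-1-\lfloor(\delta-1)/q\rfloor\bigr)$, and substitute $\delta$. You merely make explicit two steps the paper leaves implicit, namely that multiples of $q$ in $[1,\delta-1]$ contribute no new cosets and the final arithmetic extracting $\varepsilon$.
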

\begin{proof}
From \eqref{Delta}, $\lambda \delta \leq q^{m-\ell_1} - 1$. For $1 \leq i \leq \delta - 1$ with $i \not\equiv 0 \pmod{q}$, we have $$\lambda i \leq \lambda(\delta - 1) < q^{\lfloor (m+1)/2 \rfloor} - 1.$$ By Lemma~\ref{lem:6}, $\lambda i$ is a coset leader with $|C_{\lambda i}^{(q, N)}| = m$. Lemma~\ref{lem:7} then implies $i$ is a coset leader modulo $n$ with $|C_i^{(q, n)}| = m$. The number of such $i$ is $\delta - 1 - \lfloor (\delta - 1)/q \rfloor$, so
	\[
	\dim(\C) = n - m \left( \delta - 1 - \left\lfloor \frac{\delta - 1}{q} \right\rfloor \right).
	\]
	Substituting $\delta$ from \eqref{Delta} yields $\varepsilon$. This completes the proof.
\end{proof}



\section{Conclusion}

The main contribution of this paper is the determination of the parameters of the BCH codes $\C_{(q, m, \lambda, \ell_0, \ell_1)}$. We proved that their minimum distance equals their designed distance (see Theorem \ref{Distance}). This resolved two open problems posed in  \cite{BCH5}. Furthermore, we derived closed-form dimension formulas via combinatorial enumeration techniques (see Theorem \ref{THM::14} and Theorem \ref{THM:18}), resolving a third open problem from \cite{BCH5}. Although the codes $\C_{(q, m, \lambda, \ell_0, \ell_1)}$ have flexible parameters, characterizing the minimum distance for broader families of BCH codes remains an open challenge.


\begin{thebibliography}{99}

\bibitem{AK98}  
E. F. Assmus Jr. and J. D. Key, ``Polynomial codes and finite geometries," in \emph{Handbook of Coding Theory}, eds. V. S. Pless and W. C. Huffman. Amsterdam: Elsevier, 1998, pp. 1269--1343.

\bibitem{Diem1}
E.R. Berlekamp, ``The enumeration of information symbols in BCH codes," \emph{Bell Syst. Tech. J.}, pp. 1861--1880, 1967.

\bibitem{Min1}
E.R. Berlekamp, ``The weight enumerator of certain subcodes of the second order binary Reed-Muller codes," \emph{Inf. Control}, vil. 17, pp. 485--500, 1970.

\bibitem{BST}
A. Berman, Y. Shany, and I. Tamo,``Efficient algorithms for constructing minimum-weight codewords in some extended binary BCH codes,"  \textit{IEEE Trans. Inf. Theory}, vol. 70, no. 11, pp. 7673--7689, Nov. 2024. 


\bibitem{BCH2}
R. C. Bose, D. K. Ray-Chaudhuri, ``On a class of error correcting binary group codes," \textit{Inf.  Control}, vol. 3, pp. 68--79, 1960.

\bibitem{Charpin}
P. Charpin, ``Open problems on cyclic codes," in Handbook Coding Theory, vol. 1, V. Pless and W. C. Huffman, Eds. Amsterdam, The Netherlands: Elsevier, 1998, ch. 11, pp. 963-1063.

\bibitem{CJLM}
A. Cherchem, A. Jamous, H. Liu and Y. Maouche, ``Some new results on dimension and Bose distance for various classes of BCH codes," \emph{Finite Fields Appl.}, vol. 65, 101673, 2020.

\bibitem{DGM70}
P. Delsarte, J. M. Goethals, and F. J. MacWilliams,``On generalized Reed-Muller codes and their relatives," \emph{Inf. Control}, vol. 16, no. 5, pp. 403--442, July 1970.

\bibitem{Ding-1}
C. Ding, Codes From Difference Sets. Singapore: World Scientific, 2015.

\bibitem{Ding}
C. Ding, ``Parameters of several classes of BCH codes," \emph{IEEE Trans. Inf. Theory}, vol. 61, no. 10, pp. 5322--5330, Oct 2015.

\bibitem{DDZ} 
C. Ding, X. Du, and Z. Zhou, ``The Bose and minimum distance of a class of BCH codes," \emph{IEEE Trans. Inf. Theory}, vol. 61, no. 5, pp. 2351--2356, May 2015.

\bibitem{DL}
C. Ding, and C. Li, ``BCH cyclic codes," \emph{Discrete Math.}, vol. 347, 113918, 2024.

\bibitem{DLX18} 
C. Ding, C. Li, and Y. Xia, ``Another generalisation of the binary Reed-Muller codes and its applications," \emph{Finite Fields Appl.}, vol. 53, pp. 144--174, Sep. 2018.

\bibitem{DFZ}
C. Ding, C. Fan, and Z. Zhou, ``The dimension and minimum distance of two classes of primitive BCH codes," \emph{Finite Fields Appl.}, vol. 45, pp. 237--263, 2017.

\bibitem{DLMQ}
S. Dong, C. Li, S. Mesnager, and H. Qian,``Parameters of squares of primitive narrow-sense BCH codes and their complements," \textit{IEEE Trans. Inf. Theory}, vol. 69, no. 8, pp. 5017--5031, Aug. 2023. 

\bibitem{GLQS}
C. Gan, C. Li, H. Qian, and X. Shi, ``On Bose distance of a class of BCH codes with two types of designed distances," \textit{Des. Codes Cryptogr.}, vol. 92, pp. 2031--2053, 2024.

\bibitem{Grassl}
M. Grassl, Bounds on the minimum distance of linear codes and Quantum Codes. http://www.codetables.de. 

\bibitem{Griesmer}
J. H. Griesmer, ``A bound for error-correcting codes," \textit{IBM Journal of Research and Development}, vol. 4, pp. 532--542, 1960.

\bibitem{HYWSM}
X. Huang, Q. Yue, Y. Wu, X. Shi, and J. Michel, ``Binary primitive LCD BCH codes," \textit{Des. Codes Cryptogr.}, vol. 88, pp. 2453--2473, 2020.

\bibitem{BCH1}
A. Hocuenghem, ``Codes correcteurs d'erreurs," \textit{Chiffers}, vol. 2, pp. 147--156, 1959.

\bibitem{KL}
T. Kasami and S. Lin, ``Some results on the minimum weight of primitive BCH codes," \textit{IEEE Trans. Inf. Theory}, vol. 18, pp. 824--825, 1972.

\bibitem{LI}
S. Li, ``The Minimum distance of some narrow-sense primitive BCH codes," \textit{SIAM J. Discrete Math.}, vol. 31, no. 4, pp. 2530--2569, 2017.

\bibitem{BCH4}
C. Li, C. Ding, ``LCD cyclic codes over finite fields," \textit{IEEE Trans. Inf. Theory}, vol. 63, no. 7, pp. 4344--4356, Jul. 2017.

\bibitem{BCH5}
S. Li, C. Ding, M. Xiong, and G. Ge, ``Narrow-sense BCH codes over $\gf(q)$ with length $n=(q^m-1)/(q-1)$," \textit{IEEE Trans. Inf. Theory}, vol. 63, no. 11, pp. 7219--7236, Nov. 2017.

\bibitem{LLDL}
S. Li, C. Li, C. Ding, and H. Liu, ``Two families of LCD BCH codes," \textit{IEEE Trans. Inf. Theory}, vol. 63, no. 9, pp. 5699--5717, Sep. 2017. 

\bibitem{LWL}
C. Li, P. Wu, and F. Liu,``On two classes of primitive BCH codes and some related codes," \textit{IEEE Trans. Inf. Theory}, vol. 65, no. 6, pp. 3830--3840, Jun. 2019.


\bibitem{LYW}
F. Li, Q. Yue, and Y. Wu, ``Designed distances and parameters of new LCD BCH codes over finite fields," \textit{Cryptogr. Commun.}, vol. 12, pp. 147--163, 2020.

\bibitem{BCH7}
X. Ling, S. Mesnager, Y. Qi, C. Tang, ``A class of narrow-sense BCH codes over $\gf(q)$ of length $n=(q^m-1)/2$," \textit{Des. Codes Cryptogr.}, vol. 88, pp. 413--427, Feb. 2020.

\bibitem{BCH6}
H. Liu, C. Ding, C. Li, ``Dimensions of three types of BCH codes over $\gf(q)$," \textit{Discrete Math.}, vol. 340, no. 8, pp. 1910--1927, Aug. 2017.

\bibitem{LLFLR}
Y. Liu, R. Li, Q. Fu, L. Lu, and Y. Rao, ``Some binary BCH codes with length $n=2^m+1$," \textit{Finite Fields Appl.}, vol. 55, pp. 109--133, 2019.

\bibitem{LLGS}
Y. Liu, R. Li, L. Guo, and H. Song, ``Dimensions of nonbinary antiprimitive BCH codes and some
conjectures," \textit{Discrete Math.}, vol. 346, 113496, 2023.


\bibitem{mann}
H. B. Mann, ``On the number of information symbols in Bose-Chaudhuri codes," \textit{Inf. Control}, vol. 5, no. 2, pp. 153--162, 1962.


\bibitem{NLJM}
S. Noguchi, X. Lu, M. Jimbo, and Y. Miao, ``BCH codes with minimum distance proportional to code length," \textit{SIAM J. Discrete Math.}, vol. 35, no. 1, pp. 179--193, 2021.

\bibitem{PZK}
B. Pang, S. Zhu, and X. Kai, ``Five families of the narrow-sense primitive BCH codes over finite fields," \textit{Des. Codes Cryptogr.}, vol. 89, pp. 2679--2696, 2021.

\bibitem{SB}
Y. Shany and A. Berman,``The generating idempotent is a minimum-weight codeword for some binary BCH codes," \textit{IEEE Trans. Inf. Theory}, vol. 71, no. 3, pp. 1700--1704, Mar. 2025.

\bibitem{Srensen}

A. Srensen, ``Projective Reed-Muller codes," \textit{IEEE Trans. Inf. Theory}, vol. 37, no. 6, pp. 1567--1576, 1991.

 
\bibitem{SLZT}
Z. Sun, X. Liu, S. Zhu, and Y. Tang, ``Negacyclic BCH codes of length $(q^{2m}-1)/(q+1)$ and their duals," \textit{Des. Codes Cryptogr.}, vol. 92, pp. 2085--2101, 2024.

\bibitem{WWLW}
X. Wang, J. Wang, C. Li, and Y. Wu, ``Two classes of narrow-sense BCH codes and their duals," \textit{IEEE Trans. Inf. Theory}, vol. 70, no. 1, pp. 131--144, Jan. 2024.

\bibitem{XWLC}
H. Xu, X. Wu, W. Lu, and X. Cao, ``The sufficient and necessary conditions for the minimum distance of the BCH code $\C_{(q, q+1, 3, h)}$ to be $3$ and $4$", \textit{IEEE Trans. Inf. Theory}, vol. 71, no. 6, pp. 4206--4213, Jun. 2025.

\bibitem{ZSWH}
H. Zhu, M. Shi, X. Wang, and Tor Helleseth, ``The $q$-ary antiprimitive BCH codes," \textit{IEEE Trans. Inf. Theory}, vol. 68, no. 3, pp. 1683--1695, Aug. 2022. 

\bibitem{ZSK}
S. Zhu, Z. Sun, and X. Kai, ``A class of narrow-sense BCH codes," \textit{IEEE Trans. Inf. Theory}, vol. 65, no. 8, pp. 4699--4714, Aug. 2019. 






\end{thebibliography}
\end{document}